\documentclass{article}
\usepackage{helvet}
\usepackage{courier}
\usepackage[T1]{fontenc}
\usepackage[a4paper]{geometry}
\usepackage[active]{srcltx}
\usepackage{color}
\usepackage{verbatim}
\usepackage{cprotect}
\usepackage{mathrsfs}
\usepackage{mathtools}
\usepackage{dsfont}
\usepackage{amsmath}
\usepackage{amsthm}
\usepackage{amssymb}
\usepackage{makeidx}
\makeindex
\usepackage[pdfusetitle,
 bookmarks=true,bookmarksnumbered=true,bookmarksopen=true,bookmarksopenlevel=2,
 breaklinks=false,pdfborder={0 0 1},backref=page,colorlinks=false]
 {hyperref}
\hypersetup{
 unicode=false, linkcolor=black, citecolor=black, urlcolor=blue, filecolor=blue, pdfpagelayout=OneColumn, pdfnewwindow=true, pdfstartview=XYZ, plainpages=false}

\makeatletter
\numberwithin{equation}{section}
\theoremstyle{plain}
\newtheorem{thm}{\protect\theoremname}[section]
\theoremstyle{plain}
\newtheorem{question}[thm]{\protect\questionname}
\theoremstyle{plain}
\newtheorem{lem}[thm]{\protect\lemmaname}
\theoremstyle{plain}
\newtheorem{lyxalgorithm}[thm]{\protect\algorithmname}
\theoremstyle{definition}
\newtheorem{example}[thm]{\protect\examplename}
\theoremstyle{definition}
\newtheorem{defn}[thm]{\protect\definitionname}
\theoremstyle{remark}
\newtheorem{rem}[thm]{\protect\remarkname}

\usepackage{color}
\usepackage{slashed}
 \let\myTOC\tableofcontents
 \renewcommand\tableofcontents{%
   \pdfbookmark[1]{Contents}{}
   \myTOC
   \cleardoublepage
   \pagenumbering{arabic} }

\global\long\def\foreignlanguage#1#2{#2}%
\global\long\def\selectlanguage#1{}%

\allowdisplaybreaks

\DeclareMathOperator\supp{\mathrm{supp}}
\DeclareMathOperator\Tr{\mathrm{Tr}}
\DeclareMathOperator\tr{\mathrm{Tr}}

\makeatother

\providecommand{\algorithmname}{Algorithm}
\providecommand{\definitionname}{Definition}
\providecommand{\examplename}{Example}
\providecommand{\lemmaname}{Lemma}
\providecommand{\questionname}{Question}
\providecommand{\remarkname}{Remark}
\providecommand{\theoremname}{Theorem}

\begin{document}
\title{What is electric charge?: Charge as a local observable in relativistic
quantum field theory}
\author{Hideyasu Yamashita\\
{\small Division of Liberal Arts and Sciences, Aichi-Gakuin University}\\
\texttt{\small yamasita@dpc.aichi-gakuin.ac.jp}
}

\date{\today}

\maketitle
\newcommand{\dcolor}{\definecolor{note_fontcolor}{rgb}{0.1, 0.0, 0.8}}
\definecolor{HYnote_fontcolor}{rgb}{0.2, 0.0, 0.2}
\definecolor{hycolor}{rgb}{0.3, 0.0, 0.3}
\newcommand{\hyc}{\color{hycolor}}
\newenvironment{HYnote}
 {\textcolor{note_fontcolor}\bgroup\ignorespaces}
  {\ignorespacesafterend\egroup} 

\newenvironment{trivenv}
  {\bgroup\ignorespaces}
  {\ignorespacesafterend\egroup}

\newcommand{\displabel}[1]{}

\newcommand{\hidable}[3]{#2}
\newcommand{\hidea}[1]{{#1}}
\newcommand{\hideb}[1]{{#1}}
\newcommand{\hidec}[1]{{#1}}
\newcommand{\hidep}[1]{{#1}}
\renewcommand{\hidec}[1]{}
\renewcommand{\hidep}[1]{}

\newcommand{\thlab}[1]{{\tt [#1]}}

\newcommand{\black}{\color{black}}

\global\long\def\N{\mathbb{N}}%
\global\long\def\C{\mathbb{C}}%
\global\long\def\Z{\mathbb{Z}}%
 
\global\long\def\R{\mathbb{R}}%
 
\global\long\def\im{\mathrm{i}}%

\global\long\def\di{\partial}%
 
\global\long\def\d{{\rm d}}%

\global\long\def\ol#1{\overline{#1}}%
\global\long\def\ul#1{\underline{#1}}%
\global\long\def\ob#1{\overbrace{#1}}%

\global\long\def\ov#1{\overline{#1}}%

\global\long\def\then{\Rightarrow}%
 
\global\long\def\Then{\Longrightarrow}%

\global\long\def\N{\mathbb{N}}%
\global\long\def\C{\mathbb{C}}%
\global\long\def\Z{\mathbb{Z}}%
 
\global\long\def\R{\mathbb{R}}%
 
\global\long\def\im{\mathrm{i}}%

\global\long\def\di{\partial}%
 
\global\long\def\d{{\rm d}}%

\global\long\def\ol#1{\overline{#1}}%
\global\long\def\ul#1{\underline{#1}}%
\global\long\def\ob#1{\overbrace{#1}}%

\global\long\def\ov#1{\overline{#1}}%

\global\long\def\then{\Rightarrow}%
 
\global\long\def\Then{\Longrightarrow}%

\global\long\def\cA{\mathcal{A}}%
\global\long\def\cB{\mathcal{B}}%
 
\global\long\def\cC{\mathcal{C}}%
 
\global\long\def\cD{\mathcal{D}}%
\global\long\def\cE{\mathcal{E}}%
 
\global\long\def\cF{\mathcal{F}}%
 
\global\long\def\cG{{\cal G}}%
 
\global\long\def\cH{\mathcal{H}}%
 
\global\long\def\cI{\mathcal{I}}%
 
\global\long\def\cJ{\mathcal{J}}%
\global\long\def\cK{\mathcal{K}}%
 
\global\long\def\cL{\mathcal{L}}%
 
\global\long\def\cM{\mathcal{M}}%
 
\global\long\def\cN{\mathcal{N}}%
 
\global\long\def\cO{\mathcal{O}}%
 
\global\long\def\cP{\mathcal{P}}%
 
\global\long\def\cQ{\mathcal{Q}}%
 
\global\long\def\cR{\mathcal{R}}%
 
\global\long\def\cS{\mathcal{S}}%
 
\global\long\def\cT{\mathcal{T}}%
 
\global\long\def\cU{\mathcal{U}}%
 
\global\long\def\cV{\mathcal{V}}%
 
\global\long\def\cW{\mathcal{W}}%
\global\long\def\cX{\mathcal{X}}%
 
\global\long\def\cY{\mathcal{Y}}%
 
\global\long\def\cZ{\mathcal{Z}}%

\global\long\def\scA{\mathscr{A}}%
\global\long\def\scB{\mathscr{B}}%
\global\long\def\scC{\mathscr{C}}%
\global\long\def\scD{\mathscr{D}}%
 
\global\long\def\scE{\mathscr{E}}%
 
\global\long\def\scF{\mathscr{F}}%
 
\global\long\def\scG{\mathscr{G}}%
 
\global\long\def\scH{\mathscr{H}}%
 
\global\long\def\scI{\mathscr{I}}%
 
\global\long\def\scJ{\mathscr{J}}%
 
\global\long\def\scK{\mathscr{K}}%
 
\global\long\def\scL{\mathscr{L}}%
 
\global\long\def\scM{\mathscr{M}}%
 
\global\long\def\scN{\mathscr{N}}%
 
\global\long\def\scO{\mathscr{O}}%
 
\global\long\def\scP{\mathscr{P}}%
 
\global\long\def\scR{\mathscr{R}}%
\global\long\def\scS{\mathscr{S}}%
 
\global\long\def\scT{\mathscr{T}}%
 
\global\long\def\scU{\mathscr{U}}%
 
\global\long\def\scW{\mathscr{W}}%
\global\long\def\scZ{\mathscr{Z}}%

\global\long\def\bbA{\mathbb{A}}%
 
\global\long\def\bbB{\mathbb{B}}%
 
\global\long\def\bbD{\mathbb{D}}%
 
\global\long\def\bbE{\mathbb{E}}%
 
\global\long\def\bbF{\mathbb{F}}%
 
\global\long\def\bbG{\mathbb{G}}%
 
\global\long\def\bbI{\mathbb{I}}%
 
\global\long\def\bbJ{\mathbb{J}}%
 
\global\long\def\bbK{\mathbb{K}}%
 
\global\long\def\bbL{\mathbb{L}}%
 
\global\long\def\bbM{\mathbb{M}}%
 
\global\long\def\bbP{\mathbb{P}}%
 
\global\long\def\bbQ{\mathbb{Q}}%
 
\global\long\def\bbT{\mathbb{T}}%
 
\global\long\def\bbU{\mathbb{U}}%
 
\global\long\def\bbX{\mathbb{X}}%
 
\global\long\def\bbY{\mathbb{Y}}%
\global\long\def\bbW{\mathbb{W}}%

\global\long\def\bbOne{1\kern-0.7ex  1}%
 %defined as 1\kern-0.7ex1

\renewcommand{\bbOne}{\mathbbm{1}}

\global\long\def\bB{\mathbf{B}}%
 
\global\long\def\bG{\mathbf{G}}%
 
\global\long\def\bH{\mathbf{H}}%
\global\long\def\bS{\boldsymbol{S}}%
 
\global\long\def\bT{\mathbf{T}}%
 
\global\long\def\bX{\mathbf{X}}%
\global\long\def\bY{\mathbf{Y}}%
\global\long\def\bW{\mathbf{W}}%
 
\global\long\def\boT{\boldsymbol{T}}%

\global\long\def\fraka{\mathfrak{a}}%
 
\global\long\def\frakb{\mathfrak{b}}%
 
\global\long\def\frakc{\mathfrak{c}}%
 
\global\long\def\frake{\mathfrak{e}}%
 
\global\long\def\frakf{\mathfrak{f}}%
 
\global\long\def\fg{\mathfrak{g}}%
 
\global\long\def\frakh{\mathfrak{h}}%
 
\global\long\def\fraki{\mathfrak{i}}%
\global\long\def\frakk{\mathfrak{k}}%
 
\global\long\def\frakl{\mathfrak{l}}%
 
\global\long\def\frakm{\mathfrak{m}}%
 
\global\long\def\frakn{\mathfrak{n}}%
 
\global\long\def\frako{\mathfrak{o}}%
 
\global\long\def\frakp{\mathfrak{p}}%
 
\global\long\def\frakq{\mathfrak{q}}%
 
\global\long\def\fraks{\mathfrak{s}}%
 
\global\long\def\fs{\mathfrak{s}}%
 
\global\long\def\fraku{\mathfrak{u}}%
\global\long\def\frakz{\mathfrak{z}}%

\global\long\def\fA{\mathfrak{A}}%
 
\global\long\def\fB{\mathfrak{B}}%
 
\global\long\def\fC{\mathfrak{C}}%
 
\global\long\def\fD{\mathfrak{D}}%
 
\global\long\def\fF{\mathfrak{F}}%
 
\global\long\def\fG{\mathfrak{G}}%
 
\global\long\def\fK{\mathfrak{K}}%
 
\global\long\def\fL{\mathfrak{L}}%
 
\global\long\def\fM{\mathfrak{M}}%
 
\global\long\def\fP{\mathfrak{P}}%
 
\global\long\def\fR{\mathfrak{R}}%
 
\global\long\def\fS{\mathfrak{S}}%
\global\long\def\fT{\mathfrak{T}}%
 
\global\long\def\fU{\mathfrak{U}}%
\global\long\def\fV{\mathfrak{V}}%
 
\global\long\def\fW{\mathfrak{W}}%
\global\long\def\fX{\mathfrak{X}}%
\global\long\def\fZ{\mathfrak{Z}}%

\global\long\def\ssS{\mathsf{S}}%
\global\long\def\ssT{\mathsf{T}}%
 
\global\long\def\ssW{\mathsf{W}}%

\global\long\def\rM{\mathrm{M}}%
\global\long\def\prj{\mathfrak{P}}%

{} 
\global\long\def\sy#1{{\color{blue}#1}}%

\global\long\def\magenta#1{{\color{magenta}#1}}%

% \global\long\def\symb#1{{\color{red}#1}}%
\global\long\def\symb#1{#1}%

\global\long\def\emhrb#1{\text{{\color{red}{\huge {\bf #1}}}}}%

\newcommand{\symbi}[1]{\index{$ #1$}{\color{red}#1}} 

{} 
% \global\long\def\SYM#1#2{\symb{#1}_{\##2}}%
\global\long\def\SYM#1#2{#1}%

\renewcommand{\SYM}[2]{\symb{#1}}

\newcommand{\usuji}{\color[rgb]{0.7,0.4,0.4}} \newcommand{\usu}{\color[rgb]{0.5,0.2,0.1}}
\newenvironment{Usuji} {\begin{trivlist}   \item \usuji }  {\end{trivlist}}
\newenvironment{Usu} {\begin{trivlist}   \item \usu }  {\end{trivlist}} 

\newcommand{\term}[1]{\textcolor[rgb]{0, 0, 1}{\bf #1}}
\newcommand{\termi}[1]{{\bf #1}}

\newcommand{\slim}{\mathop{\mbox{s-lim}}} %

\newcommand{\wlim}{\mathop{\mbox{w-lim}}}

\newcommand{\limsub}{\mathop{\mbox{\rm lim-sub}}}

\global\long\def\bboxplus{\boxplus}%

\renewcommand{\bboxplus}{\mathop{\raisebox{-0.8ex}{\text{\begin{trivenv}\LARGE{}$\boxplus$\end{trivenv}}}}}

\global\long\def\upha{\upharpoonright}%

\global\long\def\ket#1{|#1\rangle}%
 
\global\long\def\bra#1{\langle#1|}%

{} 
\global\long\def\lll{\vert\kern-0.25ex  \vert\kern-0.25ex  \vert}%
 \renewcommand{\lll}{{\vert\kern-0.25ex  \vert\kern-0.25ex  \vert}}

\global\long\def\biglll{\big\vert\kern-0.25ex  \big\vert\kern-0.25ex  \big\vert\kern-0.25ex  }%
 
\global\long\def\Biglll{\Big\vert\kern-0.25ex  \Big\vert\kern-0.25ex  \Big\vert}%

\newcommand{\iiia}[1]{{\left\vert\kern-0.25ex\left\vert\kern-0.25ex\left\vert #1
  \right\vert\kern-0.25ex\right\vert\kern-0.25ex\right\vert}}

\global\long\def\iii#1{\iiia{#1}}%

% misc %%%%%%%%%%%%%%%%%%%%%%%%%%%%%%%%%%%%%%%%%%%%%%%%%%%%%

\global\long\def\Ae{{\rm a.e.}}%
\global\long\def\Ad{{\rm Ad}}%
\global\long\def\ad{{\rm ad}}%
\global\long\def\Borel{{\rm Borel}}%
\global\long\def\area{{\bf S}}%

\global\long\def\bbS{\mathbb{S}}%
\global\long\def\bOne{{\bf 1}}%
\global\long\def\bbOne{\mathds{1}}%
\global\long\def\Bdd{\mathscr{B}}%
\global\long\def\Borel{{\rm Borel}}%
\global\long\def\bP{{\bf P}}%

\global\long\def\Cov{\mathrm{Cov}}%

\global\long\def\Cl{{\rm C}\ell}%
\global\long\def\cconj{\blacklozenge}%
\global\long\def\cpt{{\rm c}}%
\global\long\def\cc{{\bf c}}%
\global\long\def\curve{\mathsf{C}}%

\global\long\def\CCRW{\mathcal{CCR}}%
\global\long\def\ccr{{\rm Ccr}}%
\global\long\def\CCR{{\rm CCR}}%
\global\long\def\CAR{{\rm CAR}}%
\global\long\def\ComplexS{J}%

\global\long\def\diam{\text{{\rm diam}}}%
\global\long\def\dom{\mathrm{dom}}%
\global\long\def\End{{\rm End}}%
\global\long\def\ex{{\rm ex}}%

\global\long\def\bE{\mathbf{E}}%
\global\long\def\Ex{\mathbb{E}}%

\global\long\def\GL{{\rm GL}}%
\global\long\def\grad{\mathrm{grad}}%
 
\global\long\def\Hom{\mathrm{Hom}}%

\global\long\def\DiracOpm{{\bf \mathsf{D}}}%
\global\long\def\Dconj{\mathfrak{d}}%
\global\long\def\DSpinors{\mathscr{C}}%
\global\long\def\even{{\rm even}}%

\global\long\def\GreenOp{\mathsf{S}}%
\global\long\def\hol{{\rm hol}}%

\global\long\def\Id{{\rm Id}}%
 
\global\long\def\id{{\rm id}}%

\newcommand{\Kahler}{K{\"a}hler}

\global\long\def\LBundle{\cL}%
 
\global\long\def\Leb{\text{{\rm Leb}}}%
 
\global\long\def\Lie{{\rm Lie}}%
 
\global\long\def\leng{\text{{\rm leng}}}%
 
\global\long\def\meas{\text{{\rm meas}}}%

\global\long\def\Manifold{\mathscr{X}}%
 
\global\long\def\Mat{{\rm Mat}}%

\global\long\def\nablaslash{\slashed{\nabla}}%

\global\long\def\ON{{\rm ON}}%
{} 
\global\long\def\OCpl{{\rm OC}}%
\global\long\def\Odd{{\rm Odd}}%

\global\long\def\Proj{{\rm Proj}}%
 
\global\long\def\Prob{\mathbb{P}}%
\global\long\def\Var{\mathrm{Var}}%
\global\long\def\Pow{\mathsf{P}}%
{}

\global\long\def\p{\mathbf{p}}%
 
\global\long\def\q{\mathbf{q}}%

\global\long\def\qdev{{\rm qd}}%
\global\long\def\Paths{\scE}%
 
\global\long\def\Pin{{\rm Pin}}%
 
\global\long\def\POVM{\mathsf{M}}%
\global\long\def\STime{\mathcal{M}}%

\global\long\def\quantity{\mathsf{q}}%
\global\long\def\Quantities{\mathscr{Q}}%
\global\long\def\ran{\mathop{\mathrm{ran}}}%

\global\long\def\rD{{\rm D}}%
\global\long\def\re{{\bf r}}%
\global\long\def\RDSPinors{\scR}%

\global\long\def\spec{{\rm spec}}%
 
\global\long\def\Sp{{\rm Sp}}%

\global\long\def\subset{\subseteq}%
\global\long\def\sgn{{\rm sgn}}%
\global\long\def\Span{{\rm span}}%
\global\long\def\Spin{{\rm Spin}}%
\global\long\def\spin{\mathfrak{spin}}%
\global\long\def\Sol{\mathsf{Sol}_{{\rm sc}}}%

\global\long\def\Ten{\bullet}%
{} %

\global\long\def\TT{\intercal}%
 \renewcommand{\TT}{\mathsf{T}}

\global\long\def\vac{{\rm vac}}%

{} %

\global\long\def\WICK#1{{\rm w}\{#1\}}%
 \renewcommand{\WICK}[1]{{:}#1{:}}

\global\long\def\weyl{\mathsf{W}}%
\global\long\def\x{\mathbf{x}}%
 
\global\long\def\y{\mathbf{y}}%

\def\foreignlanguage#1#2{#2}

%\clearpage

\let\ruleorig=\rule
\renewcommand{\rule}{\noindent\ruleorig}

\global\long\def\labelenumi{(\arabic{enumi})}%

\newcommand{\reff}{\ref}

\global\long\def\Borel{{\rm Borel}}%

\global\long\def\Q{\mathsf{Q}}%
\global\long\def\FR{{\rm FR}}%
\global\long\def\FRP{{\rm FRP}}%
\global\long\def\Im{\mathop{{\rm Im}}}%
\global\long\def\wcolon{{:}}%
\global\long\def\wick#1{\,{:}#1{:}\,}%

\global\long\def\Scope{{\rm Scope}}%
\global\long\def\cfunc{\zeta}%
\global\long\def\balpha{\boldsymbol{\alpha}}%
\global\long\def\Spec{{\rm Spec}}%
\global\long\def\E{\cE}%

\begin{abstract}
It is rather surprising that modern quantum physics does not appear
to have provided any clear answer to the simple question ``what is
electric charge?''. Even when the total charge operator $Q$ is well-defined,
the non-locality of $Q$ implies that it is not an observable in the
usual sense, which can be measured by a (local) experimental apparatus.
A candidate for the ``local version'' of charge operator is the
4-current operator $j=(j^{\mu})_{\mu=0,1,2,3}$. However, it is known
that the rigorous definition of $j$ is difficult in $(3+1)$-dimensional
Minkowski space. Although it was found that the current can be defined
in $(1+1)$-dimensions (Carey et al.), I argue that even when $j$
can be suitably defined, the interpretability of $j$ as the ``local
charge operator'' is dubious. Instead I return to Araki and Wyss
(1964), and propose the concept of ``scope-local charge'' $Q_{\cfunc}(P)$
for a ``scope'' $P$, expressed by a finite-dimensional projection.
I work in an abstract $C^{*}$-algebraic setting.
\end{abstract}

\section{Introduction}

It is rather surprising that modern quantum physics does not appear
to have provided any clear answer to the simple question ``what is
electric charge?''.\cprotect\footnote{Actually, the same applies to most of the fundamental concepts such
as ``position'', ``momentum'', ``time'', ``energy'', etc.
Further, all these concepts become less clear in curved spacetimes.
However, the concept of electric charge appears to be one of the most
difficult aspects to deal with in axiomatic and/or algebraic QFT.
For example, although the deep results of the DHR analysis (see \cite[Ch.IV]{Haa96}
and referenced therein) considerably clarified the concept of ``charge''
in some special cases (e.g., baryon number and lepton number), it
excludes states with electric charge from consideration.%
} It seems that there has been no firm, unified view on this issue
until now, even for the (free) Dirac field in Minkowski spacetime
(for recent arguments, see e.g., \cite{Tum2022,RT2024}). The notion
of charge in a free field will be more complicated in curved spacetime,
see e.g., \cite{DL2012,BBS2020}. Of course, the problem would become
far more difficult for interacting fields.

Here by a ``clear answer'', I mean an answer which has both the
following two properties:
\begin{enumerate}
\item Rigor: It is desirable that the (electric) charge can be defined as
a well-defined self-adjoint operator on a Hilbert space $\cH$; equivalently,
as a $\cP$-valued measure on $\Borel(\R)$ (Borel subsets of $\R$),
where $\cP$ denotes the set of projections on $\cH$. More generally,
we can also consider the cases where $\cP$ denotes the set of the
projections in a $C^{*}$-algebra or $W^{*}$-algebra, etc.
\item Empirically clarity: It is desirable that we can specify the operational/experimental
procedures to measure the charge in some ``scope''.
\end{enumerate}
Here, I used the term ``scope'' in the sense illustrated below.
A typical example is the ``photon number'' observable, which is
frequently used in quantum optics. Generally, ``the number of all
the photons in a bounded spacetime region'' is probably meaningless,
because of the existence of the ``infrared particles''; Empirically,
it is extremely difficult to count low-energy photons (e.g., in a
long wave) by a counter such as a photomultiplier tube; Mathematically,
we have also the difficulty concerning infrared divergence. Instead,
we usually consider the number of photons in a ``single mode'',
e.g., in a single laser beam, in quantum optics. Thus we do not consider
the number of \emph{all} photons in a spacetime region $\cO$, but
that of only some special sort of photons in $\cO$, in other words,
the photons found in a specific restricted ``scope''. Similarly,
perhaps the quantity ``the total charge in region $\cO$'' is not
definable both mathematically and empirically, and so instead I would
like to consider ``the total charge in a restricted scope''.

Although the condition (1) might not be a necessary condition of (2),
it is hard to conceive the situation where there is an answer which
does not satisfy (1), but do (2), because usually a precise description
of a measurement procedure needs a consistent mathematical language.

Of course, the condition (1) is not a \emph{sufficient} condition
of (2). For example, consider the (free) quantum Dirac field (of electrons/positrons),
formulated on a fermion Fock space $\cF=\cF(\cH_{+}\oplus\cH_{-})$,
where $\cH_{+}$ (resp.~$\cH_{-}$) is the single-positron (resp.~single-electron)
Hilbert space. Then the total charge operator $Q$ is easily defined
to be $Q:=N_{+}-N_{-}$, where $N_{+}$ and $N_{-}$ are the number
operators of positrons and electrons, respectively. (Precisely, instead
the operator $eQ$ ($e>0$: elementary charge) should be called the
charge operator.) However, it seems that the total charge $Q$ itself
has very little empirical significance; We can interpret $Q$ as the
``total charge of the whole Universe'', which cannot be measured
by any realistic measurement apparatus, in principle. Thus, although
the total charge operator $Q$ can be seen as a ``clear answer''
in the sense of (1), it is no good answer in the sense of (2). Actually,
we can only measure the \emph{local} observables. 

On the other hand, we know that it can be verified by a (local) experimental
procedure that any value of electric charge is an integer multiple
of the elementary charge $e$ (by e.g., Millikan's oil drop experiment.)
Let us call it the \termi{integer charge law}. Although the mathematical
fact that the spectrum of the total charge operator $Q$ is $\Z$
partially explains the integer charge law, this fact itself is not
an expression of this physical law, since $Q$ is not a local observable.

Similarly, there can be very little doubt that the \termi{charge conservation law}
is an physical/empirical law which is verifiable by (local) experimental
procedures. Although we can partially explain the charge conservation
law by the total charge operator $Q$ with the superselection rule
w.r.t.~$Q$, this is not an description of the charge conservation
law, since $Q$ is not a local observable.

\subsection{Current observables}

\label{subsec:Current-observables}

Let $\R^{1,3}$ denote the Minkowski space, and $j=(j^{0},...,j^{3}):\R^{1,3}\to\R^{1,3}$
the classical 4-electric current. If there exists a quantization $\hat{j}=(\hat{j}^{0},...,\hat{j}^{3})$
of $j$, it should be defined as an operator-valued distribution,
and each operator $\hat{j}^{\mu}(f)$ (here $f$ is a test function,
e.g., a compactly supported smooth function on $\R^{1,3}$) is expected
to be interpretable as a sort of the ``local charge observable''.%
{} The classical charge conservation law is expressed by $\di_{\mu}j^{\mu}=0$,
and so one may expect that the quantum version of this law is also
expressed by $\di_{\mu}\hat{j}^{\mu}=0$ (in the sense of operator-valued
distribution) in QFT.

In the following, the quantized current is denoted simply by $j$,
instead of $\hat{j}$. For example, for the (second quantized) Dirac
field $\psi(x)$, usually it is formally expressed as $j^{\mu}(x):=e:\ol{\psi}(x)\gamma^{\mu}\psi(x):$,
where $\gamma^{\mu}$ denotes the Dirac gamma matrices, and the colons
mean normal ordering (Wick ordering). However, note that both the
mathematically rigorous definition and the empirical meaning of $j^{\mu}$
are far from clear in general. As to the mathematical side, there
are several no-go-type theorems on the rigorous definability of $Q$
and $j^{\mu}$ (e.g., \cite{Swi1967,FPS1974}), while also some positive
results are found (e.g., \cite{Mai1972,Req1976,BDMRS1997,MS2003}).

On the other hand, in $(1+1)$-dimensional Minkowski spacetime $\R^{1,1}$,
the \termi{current algebra} (roughly, the algebra generated by the
2-current $j^{\mu}$, $\mu=0,1$), has been extensively and successfully
investigated (e.g., \cite{CHO1983,CR1987}). One may expect that this
will serve as a good test case for the study of more realistic current
algebras in $(3+1)$-dimensions. %
However, it should be noted that it is very difficult to generalize
the methods employed in the studies of the current algebras in $(1+1)$-dimensions
to higher dimensions (see \cite{Lang1994} and references therein).
Also let us quote the following (the last paragraph of \cite[Sec.6.1]{CR1987}):
\begin{quote}
For space-time dimension $d>2$, smearing the free currents at time
zero does not lead to operators (the HS condition is violated), so
that there appears to be no obvious way to make analytical sense of
such equal-time current commutation relations (in contrast to the
commutation relations for the fields, since smearing the free time-zero
Dirac fields gives rise to bounded operators for any $d$).
\end{quote}
Furthermore, it should be noted that there are some difficulties on
the ``current-oriented approach'' to QFT, even in $(1+1)$-dimensions.
Let us quote \cite[Sec.1]{CH1981}:
\begin{quote}
{[}In the Schwinger model{]} the dynamics can be expressed solely
in terms of currents without reference to the fermion fields themselves.
This possibility does not seem to be open in (QED)$_{2}$ {[}i.e.,
$(1+1)$-dimensional QED with massive fermions{]} and there more subtle
methods seem to be required. ({[}...{]} is added by H.Y.)
\end{quote}
For a function $f:\R\to\R$ and $t\in\R$, the \termi{sharp-time smeared charge operator}
$Q_{t}(f)$ at time $t$ is formally given by
\begin{equation}
Q_{t}(f)\equiv j^{0}(t,f):=\int_{-\infty}^{\infty}f(x)j^{0}(t,x)\d x,\label{eq:def:Qt}
\end{equation}
if this integral can be defined in some sense. For $S\subset\R$,
let $1_{S}(x):=1$ if $x\in S$, $1_{S}(x):=0$ otherwise. If (\ref{eq:def:Qt})
can be defined for $f=1_{S}$ for some subset $S$ of $\R$, it should
be interpreted as the charge observable in the space region $S$ at
time $t$; If $S=\R$, naturally one may expect that $Q_{t}(1_{S})$
equals the total charge operator $Q$. However it does not seem that
$Q_{t}(1_{S})$ can be defined reasonably by (\ref{eq:def:Qt}), for
a nontrivial region $S\subset\R$. In fact, even in the seemingly
trivial case where $S=\R$, the meaning of the r.h.s.~of (\ref{eq:def:Qt})
is far from clear in general. (For $(3+1)$-dimensional cases, see
\cite{Mai1972,FPS1974} and references therein. See also \cite{HR1977,Tum2022,RT2024}.)

If $Q_{t}(\cdot)$ is seen as an operator-valued Schwartz (resp.~tempered)
distribution, one should assume $f\in C_{\cpt}^{\infty}(\R)$ (compactly
supported smooth functions) (resp.~$f\in\cS(\R)$, smooth functions
of rapid decrease). In this case, if $f(x)\ge0$ and $\int_{\R}f(x)\d x=1$,
one may interpret $f$ as a ``fuzzy region'' in $\R$. For the Dirac
fields, Carey\,\&\,Hurst\,\&\,O'Brien \cite{CHO1983} adopted
the assumption $f\in\cS(\R)$, and Carey\,\&\,Ruijsenaars \cite{CR1987}
adopted the weaker assumption $f\in H_{1}(\R)$ (the Sobolev space
which consists of all absolutely continuous $L^{2}$-functions with
$L^{2}$-derivatives). They showed that (\ref{eq:def:Qt}) is reasonably
defined as a unbounded selfadjoint operator on a Hilbert space (see
Sec.\,\ref{sec:Current(1+1)}). Similarly, the sharp-time (spacial)
current operator $J_{t}$ can be defined by 
\begin{equation}
J_{t}(f)\equiv j^{1}(t,f):=\int_{-\infty}^{\infty}f(x)j^{1}(t,x)\d x,\qquad f\in H_{1}(\R).\label{eq:def:Jt}
\end{equation}
$Q_{t}$ and $J_{t}$ are shown to satisfy the canonical commutation
relation $[Q_{t}(f),J_{t}(g)]=\im B(f,g)$, where $B(\cdot,\cdot)$
is a bilinear form on $\cS(\R)$ \cite[Sec.3]{CHO1983}, that is,
they behave like boson fields (boson-fermion correspondence in $(1+1)$-dimensions).
Hence we find that (1) both $Q_{t}(f)$ and $J_{t}(g)$ ($f,g\not\equiv0$)
have the continuous spectrum $\R$, and (2) $Q_{t}$ and $J_{t}$
(i.e., $j^{0}$ and $j^{1}$) do not commute in general. These facts
raise the following questions.
\begin{question}
\label{que:integer}How can we express the integer charge law in terms
of $Q_{t}$ and $J_{t}$ (or $j^{0}$ and $j^{1}$)? How do the measurements
of $\R$-valued observables $Q_{t}$ and $J_{t}$ lead to the integer
charge law? How are such measurements related to e.g., Millikan's
oil drop experiments?
\end{question}

\begin{question}
\label{que:conservation}If we cannot measure the charge density $j^{0}$
and the spacial current $j^{1}$ simultaneously, then how can we justify
to call the equation $\di_{\mu}j^{\mu}=0$ the ``charge conservation
law''? How can we verify the equation $\di_{\mu}j^{\mu}=0$ experimentally? 
\end{question}

In these questions, I assumed that the current $j^{\mu}$ can be rigorously
defined as an operator-valued distribution in $(1+1)$-dimensions.
In $(3+1)$-dimensions, the situation would get even worse in that
the rigorous definability of 4-current $j^{\mu}$ is not be made clear
yet. Of course we are all familiar with both the integer charge law
and the charge conservation law, and (perhaps) believe that they can
be verified experimentally. Nevertheless, it seems that relativistic
QFT has not even been able to give any precise statements of these
empirical laws; that is, perhaps no clear answer has yet been found
to the simple questions ``what is the integer charge law?'' and
``what is the charge conservation law?''.

I am somewhat skeptical of the prospects on the existence of the positive
answers to Questions \ref{que:integer}, \ref{que:conservation}.
Instead I would like to raise the following
\begin{question}
Can we give the precise statements of these two empirical laws in
terms of local observables, but without employing the notion of 4-current
(or 2-current)?
\end{question}

As I mentioned above, it is unlikely that we can define the observable
$Q(1_{S})$ expressing the charge in a nontrivial spacial region $S$.
If instead we consider a ``fuzzy region $f$'', i.e., a real-valued
test function $f\in C_{\cpt}^{\infty}(\R^{d})$ ($d$: space dimension),
then we can define the smeared operator $Q(f)$ when $d=1$, which
might be interpreted as the charge within the fuzzy region $f$. However,
this interpretation is fragile because of Questions \ref{que:integer}
and \ref{que:conservation}, and further the definability of $Q(f)$
is not clear for $d>1$. Therefore it seems that there is not much
hope to justify the notion of ``the (total) charge within a (possibly
fuzzy) spacial region''. Instead I would like to consider the notion
of ``the (total) charge in a \emph{scope}'', similarly to the notion
of ``the photon number in a \emph{scope}''. 

\subsection{\protect\label{subsec:particle}Relation to the spatial localizability
of a particle}

Evidently, the problem of the (spacial) localizability of charge is
related to that of localizability of a particle, although the former
is not the same as the latter (see e.g., \cite{Tum2022}). Usually
we believe that the charge of a positron/electron is detectable \emph{locally},
e.g., as a trajectory in a cloud (or bubble) chamber. However notice
that it is likely that the concept of (spatially localizable) particle
is no longer maintainable in relativistic QFT. Let us quote Halvorson\,\&\,Clifton
\cite[Sec.1]{HC2002};
\begin{quote}
It is a widespread belief, at least within the physics community,
that there is no particle mechanics that is simultaneously relativistic
and quantum-theoretic; and, thus, that the only relativistic quantum
theory is a \emph{field} theory. This belief has received much support
in recent years in the form of rigorous ``no-go theorems'' by Malament
\cite{Mal1996} and Hegerfeldt \cite{Hag1998a,Hag1998b}.
\end{quote}
I agree with this view. On the other hand, I do not fully agree with
the following \cite[Sec.7]{HC2002}:
\begin{quote}
The argument for localizable particles appears to be very simple:
Our experience shows us that objects (particles) occupy finite regions
of space. But the reply to this argument is just as simple: These
experiences are illusory!
\end{quote}
However, for example, the trajectories seen in a cloud/bubble chamber
will not be an illusion or an artifact. (If so, all the discoveries
in %
particle physics (e.g., of neutrons, muons, pions, etc.) obtained
with cloud/bubble chambers should be called ``illusions''!) Hence
at least physicists should be able to explain and predict such phenomena
in a cloud chamber, in terms of some mathematical formulation, rigorously
but simply as well as possible.\footnote{For example, an ``explanation'' in terms of standard QED (with renormalization
procedures) seems neither rigorous nor simple. In fact, ultimately
QED itself should be explained in a suitable manner. Perhaps, a simplified,
schematic, but rigorous (logically firm and consistent) explanation
will be conceptually more significant. Also note that a ``simple
but non-rigorous explanation'' tends to be based on one's impressions,
and so tends to lack generality.}

Even if we give up understanding a trajectory in a cloud chamber as
(an appearance of) a localized particle, there is a possibility of
the interpretation that the trajectory is (an appearance of) a localized
\emph{charge}. However, even the latter interpretation does not seem
possible without any restriction, as I mentioned in Subsec.\,\ref{subsec:Current-observables},
see also \cite{Tum2022,RT2024}. 

Again consider the example of photon number observable. Probably it
is impossible to give any meaning to ``the number of all photons
in a space region $\cO$'', in general. However, in quantum optics,
it appears that one can consistently refer to the photon number in
a pulse of laser beam, which is spatially localized. I would like
to express this fact as follows: The latter photon number observable
can be understood as the total number of photons not only in the space
region $\cO$, but also in a restricted ``scope'' (so-called ``mode'',
in quantum optics). Let us call this the concept of \termi{scope-localized particles}.
However this is not the main topic of this article; instead I will
examine the concept of \termi{scope-localized charge} here.

\section{CAR algebra, Araki\textendash Wyss map}

\label{sec:Araki-Wyss-map}

In this paper we confine ourselves to charged fermion fields. Furthermore
the existence of the sharp-time field is assumed. This assumption
is satisfied by all free fields, and also by some interacting fields
with cutoff in some sense (e.g., Hamiltonian lattice field theories).
In this case, each sharp-time field (especially the time-zero field)
is expressed as a CAR $C^{*}$-algebra defined as follows.

Let $(\cV,\langle\cdot|\cdot\rangle)$ be a complex inner product
space, and $\ol{\cV}$ the completion of $\cV$, which is a Hilbert
space. For $f\in\ol{\cV}$, let $f^{*}\in\cV^{*}$ denote the functional
$f^{*}:g\mapsto\langle f|g\rangle$, $g\in\ol{\cV}$. 

The (charged) \termi{CAR $C^{*}$-algebra} $\SYM{\CAR(\cV)}{CAR}$
\cite{BR97} is the $C^{*}$-algebra generated algebra ``freely''
generated by the elements $\{\Psi(f)|f\in\cV\}$ satisfying 
\[
\{\Psi(f),\Psi(g)\}=0,\qquad\{\Psi(f),\Psi^{*}(g)\}=\langle f|g\rangle\bbOne.
\]
Typically, $\cV$ is set to the test function space $C_{\cpt}^{\infty}(\R^{d})$
($d$: space dimension) of the time-zero field, which is a dense subspace
of $L^{2}(\R^{d})$. In this case, all generators $\Psi(f),\Psi^{*}(f)$
($f\in\cV$) are called ``local'' in the sense that $\supp(f)$
is compact for all $f\in\cV$, and that
\[
\supp(f)\cap\supp(g)=\emptyset\quad\Then\quad\{\Psi(f),\Psi(g)\}=\{\Psi(f),\Psi^{*}(g)\}=0.
\]
For any $t\in\R$, the map $f\mapsto e^{\im t}f$ on $\cV$ induces
an automorphism $\tau_{t}$ of $\CAR(\cV)$. The automorphism group
$\{\tau_{t}|t\in\R\}\cong U(1)$ is called the (global) \termi{gauge group},
and each $\tau_{t}$ is called a \termi{gauge transformation}. Usually,
the \termi{observables} in $\CAR(\cV)$ are identified with the gauge-invariant
elements of $\CAR(\cV)$. For example, any element of $\CAR(\cV)$
of the form $\Psi^{*}(f)\Psi(f)$ ($f\in\cV$) is seen as a \emph{local}
observable.

Let $K:\cV\to\cV$ be a finite-rank linear operator on $\cV$. %
{} That is, there exists $n\in\N$ and an orthonormal system $\{e_{i}\}_{i=1,...,n}$
of $\cV$ such that $K=\sum_{i,j=1}^{n}K_{ij}e_{i}\otimes e_{j}^{*}$,
$K_{ij}:=\langle e_{i}|Ke_{j}\rangle$, equivalently $Kf=\sum_{i,j=1}^{n}K_{ij}\langle e_{j}|f\rangle e_{i}$
for all $f\in\cV$. Note that $K^{*}=\sum_{i,j=1}^{n}\ol{K_{ij}}e_{j}\otimes e_{i}^{*}$.
{} %
Let
\[
\SYM{\d\Gamma(K)}{dGa}\equiv K\Psi^{*}\Psi:=\sum_{i,j=1}^{n}K_{ij}\Psi^{*}(e_{i})\Psi(e_{j})=\sum_{j=1}^{n}\Psi^{*}(Ke_{j})\Psi(e_{j}).
\]
This is independent of the choice of the orthonormal system $\{e_{i}\}$.
Let $\SYM{\FR(\cV)}{FR}$ denote the {*}-algebra of finite-rank operators
on $\cV$. Then we have the linear map $\d\Gamma:\FR(\cV)\to\CAR(\cV)$,
$K\mapsto K\Psi^{*}\Psi$. Let us call $\d\Gamma$ the \termi{Araki--Wyss map}
(\cite{AW1964}, see also \cite{CHO1983,CR1987,Tha1992}). We see
$\d\Gamma(K)$ is an observable, i.e., gauge-invariant. If all generators
of $\CAR(\cV)$ are local, $\d\Gamma(K)$ is a \emph{local} observable.

For any $K\in\FR(\cV)$, there are two orthonormal systems $\{f_{i}\}$
and $\{g_{i}\}$ in $\cV$ such that
\[
K=\sum_{i=1}^{n}\lambda_{i}f_{i}\otimes g_{i}^{*},\qquad\lambda_{i}\ge0,
\]
and hence
\[
\|K\Psi^{*}\Psi\|\le\sum_{i=1}^{n}\lambda_{i}=\|K\|_{1}\qquad(\text{trace norm}).
\]
Thus, if $\FR(\cV)$ is given the trace norm, the map $\d\Gamma:\FR(\cV)\to\CAR(\cV)$
is continuously extended for trace-class operators $K$ on $\cV$,
as pointed out by Araki and Wyss \cite{AW1964}. %

Note that $\d\Gamma$ is a Lie algebra homomorphism:
\[
[\d\Gamma(A),\d\Gamma(B)]=\d\Gamma([A,B]),\qquad\forall A,B\in\FR(\cV),
\]
and furthermore $\d\Gamma$ satisfies
\[
[\d\Gamma(A),\Psi^{*}(f)]=\Psi^{*}(Af),\qquad[\d\Gamma(A),\Psi(f)]=-\Psi(A^{*}f),\qquad\forall A\in\FR(\cV),f\in\cV,
\]
and hence
\[
\Psi(e^{A}f)=e^{\d\Gamma(A)}\Psi(f)e^{-\d\Gamma(A)},\qquad\forall f\in\cV.
\]
Let $\SYM{\FRP(\cV)}{FRP}$ denote the set of (orthogonal) projections
on $\cV$ of finite rank. 

Recall that for a unital $C^{*}$-algebra $\fA$ and $x\in\fA$, the
spectrum $\Spec_{\fA}(x)$ of $x$ in $\fA$ is defined by
\[
\Spec_{\fA}(x):=\{\lambda\in\C|(x-\lambda\bbOne)\text{ is not invertible in }\fA\},
\]
and also the following simple fact:
\begin{lem}
\label{lem:E1EnSpec-1}Let $E_{1},...,E_{n}$ be nonzero projections
in a unital $C^{*}$-algebra $\fA$ such that $i\neq j\then E_{i}E_{j}=0$
and $\sum_{i=1}^{n}E_{i}=\bbOne$. Let $A:=\sum_{i=1}^{n}z_{i}E_{i}$,
$z_{i}\in\C$. Then $\Spec_{\fA}(A)=\{z_{1},...,z_{n}\}$.%
\end{lem}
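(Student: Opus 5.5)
We prove the two inclusions separately. Throughout we use only the algebraic relations $E_iE_j=\delta_{ij}E_i$ and $\sum_i E_i=\bbOne$. Nothing specific to $C^{*}$-algebras is needed beyond the existence of the unit and the nonvanishing of the $E_i$.

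\emph{Step 1: $\Spec_{\fA}(A)\subset\{z_1,\dots,z_n\}$.} Let $\lambda\notin\{z_1,\dots,z_n\}$. Since $\sum_i E_i=\bbOne$, we can write $A-\lambda\bbOne=\sum_{i=1}^{n}(z_i-\lambda)E_i$. Put
\[
B:=\sum_{i=1}^{n}(z_i-\lambda)^{-1}E_i .
\]
Orthogonality gives
\[
(A-\lambda\bbOne)B=\sum_{i,j}(z_i-\lambda)(z_j-\lambda)^{-1}E_iE_j=\sum_i E_i=\bbOne,
\]
and likewise $B(A-\lambda\bbOne)=\bbOne$. Hence $A-\lambda\bbOne$ is invertible in $\fA$, so $\lambda\notin\Spec_{\fA}(A)$.

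\emph{Step 2: $\{z_1,\dots,z_n\}\subset\Spec_{\fA}(A)$.} Fix $k$ and suppose, for contradiction, that $A-z_k\bbOne$ has an inverse $C\in\fA$. Multiplying $A-z_k\bbOne=\sum_i(z_i-z_k)E_i$ by $E_k$ gives $(A-z_k\bbOne)E_k=(z_k-z_k)E_k=0$. Then
\[
E_k=C(A-z_k\bbOne)E_k=0,
\]
which contradicts the hypothesis $E_k\neq0$. So $z_k\in\Spec_{\fA}(A)$.

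The only point needing care is Step 2, and that is exactly where the hypothesis that the $E_i$ are nonzero is used. The argument does not assume that the $z_i$ are distinct. If some coincide, one may merge the corresponding projections, but this is not needed, since the argument above already covers repeated values.
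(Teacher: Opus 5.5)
Your proof is correct. The paper gives no proof of this lemma (it is quoted as a ``simple fact''), and your argument is the standard direct verification one would supply: the explicit inverse $\sum_i (z_i-\lambda)^{-1}E_i$ for $\lambda\notin\{z_1,\dots,z_n\}$, and $(A-z_k\bbOne)E_k=0$ with $E_k\neq 0$ ruling out invertibility at $z_k$; it also handles repeated $z_i$ and uses only the unit and the relations $E_iE_j=\delta_{ij}E_i$.
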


From the above lemma we readily find the following
\begin{lem}
For any $P\in\FRP(\cV)$, the spectrum of $\Gamma(P)$ in $\fA:=\CAR(\cV)$
is given by $\Spec_{\fA}(\Gamma(P))=\{0,1,...,n_{P}\}$, where $n_{P}:={\rm rank}(P)=\Tr P$.
\end{lem}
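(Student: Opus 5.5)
Here $\Gamma(P)$ is read as $\d\Gamma(P)$, the image of $P$ under the Araki--Wyss map. The plan is to write $\d\Gamma(P)$ as a finite linear combination of mutually orthogonal projections summing to $\bbOne$, and then apply Lemma \ref{lem:E1EnSpec-1}.

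Fix an orthonormal basis $e_1,\dots,e_n$ of $\ran P$, with $n=n_P$. Then $P=\sum_i e_i\otimes e_i^*$, so
\[
\d\Gamma(P)=\sum_{i=1}^{n}N_i,\qquad N_i:=\Psi^*(e_i)\Psi(e_i).
\]
From the CAR, each $N_i$ is a projection, since $N_i^*=N_i$ and $N_i^2=\Psi^*(e_i)(\bbOne-\Psi^*(e_i)\Psi(e_i))\Psi(e_i)=N_i$ using $\Psi(e_i)^2=0$. Moreover the $N_i$ commute pairwise: for $i\neq j$ the generators $\Psi(e_i),\Psi^*(e_i)$ anticommute with $\Psi(e_j),\Psi^*(e_j)$, and $N_i$ is even.

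For each $S\subseteq\{1,\dots,n\}$ set
\[
F_S:=\prod_{i\in S}N_i\prod_{i\notin S}(\bbOne-N_i).
\]
These are commuting products of projections, so they are projections. They are mutually orthogonal, since for $S\neq S'$ some index contributes $N_i(\bbOne-N_i)=0$. They sum to $\prod_i(N_i+\bbOne-N_i)=\bbOne$. Also $\d\Gamma(P)F_S=|S|F_S$, so
\[
\d\Gamma(P)=\sum_{k=0}^{n}kE_k,\qquad E_k:=\sum_{|S|=k}F_S.
\]
The $E_k$ are orthogonal projections with $\sum_k E_k=\bbOne$.

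To apply Lemma \ref{lem:E1EnSpec-1} we need each $E_k\neq0$, and this is the only nontrivial step. It suffices to show $F_S\neq0$ for every $S$.

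The plan is to pass to the Fock representation $\pi$ of $\CAR(\cV)$ (or its restriction to the finite-dimensional subalgebra generated by $\Psi(e_i)$, $i\le n$, which is isomorphic to $M_{2^n}(\C)$). Let $\Omega$ be the vacuum. Then $\pi(F_S)$ acts as the identity on $\prod_{i\in S}\Psi^*(e_i)\Omega$, which is a unit vector, so $F_S\neq0$.

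Alternatively, and purely algebraically: the elements $\Psi^*(e_i)$ and $\Psi(e_i)$, $i\le n$, generate a copy of $M_{2^n}(\C)$ inside $\CAR(\cV)$ via the Jordan--Wigner construction. In that copy $F_S$ is a rank-one matrix unit, hence nonzero. Since the $C^*$-algebra embedding is isometric, $F_S\neq0$ in $\fA$.

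With all $E_0,\dots,E_n$ nonzero, Lemma \ref{lem:E1EnSpec-1} gives $\Spec_\fA(\d\Gamma(P))=\{0,1,\dots,n_P\}$.
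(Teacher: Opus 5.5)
Your proposal is correct and takes the route the paper intends: the paper just says the lemma follows ``readily'' from Lemma \ref{lem:E1EnSpec-1}, and your decomposition $\d\Gamma(P)=\sum_{k=0}^{n}kE_k$ into mutually orthogonal projections $E_k=\sum_{|S|=k}F_S$ summing to $\bbOne$ is exactly the input that lemma requires. Your check that each $F_S\neq0$, via the Fock vacuum or via the embedded copy of $M_{2^n}(\C)$, supplies the nonvanishing hypothesis that the paper leaves implicit.
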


Let $\cfunc:\FR(\cV)\to\C$ be a linear functional. Let

\begin{equation}
\SYM{\Q_{\cfunc}(A)}{Qal}:=\d\Gamma(A)-\cfunc(A)\bbOne.\label{eq:def:Qalpha}
\end{equation}
Then we see
\[
[\Q_{\cfunc}(A),\Q_{\cfunc}(B)]=[\d\Gamma(A),\d\Gamma(B)]=\d\Gamma([A,B])=\Q_{\cfunc}([A,B])+\cfunc([A,B])\bbOne,
\]
and hence $\Q_{\cfunc}:\FR(\cV)\to\CAR(\cV)$ is not a Lie algebra
homomorphism, in general. However, we have
\begin{equation}
[\Q_{\cfunc}(A),\Q_{\cfunc}(B)]=0\quad\text{ whenever }\quad[A,B]=0.\label{eq:=00005BA,B=00005D=00003D0}
\end{equation}
Especially, if $P,P'\in\FRP(\cV)$ and $P'\le P$, then $[\Q_{\cfunc}(P),\Q_{\cfunc}(P')]=0$.

\def\algorithmname{Working Hypothesis}

I propose the following
\begin{lyxalgorithm}
\label{wh:Q}Any ``scope'' is mathematically expressed by a finite-rank
projection operator $P$ on $\cV$, i.e., $P\in\FRP(\cV)$. %
{} Let $\SYM{\Scope(\cV)}{Scope}\subset\FRP(\cV)$ denote the set of
``scopes''. For any $P\in\Scope(\cV)$, the local observable of
``the total charge in scope $P$'' is expressed as $\Q_{\cfunc}(P)$,
for some fixed functional $\cfunc:\FR(\cV)\to\C$. We assume that
the spectrum of $\Q_{\cfunc}(P)$ is a (finite) subset of $\Z$ for
all $P\in\Scope(\cV)$. This can be interpreted as an expression of
the integer charge law.

\end{lyxalgorithm}

More concretely, I would like to put the following
\begin{lyxalgorithm}
\label{wh:even}(1) $\cfunc$ is given by $\cfunc(A):=\frac{1}{2}\Tr A$,
and (2) every $P\in\Scope(\cV)$ is of even-rank, i.e., $n_{P}=\Tr P$
is even.
\end{lyxalgorithm}

I suppose that the concept of ``scope'' should have some empirical
meaning, and also the set $\Scope(\cV)$ of scopes should have not
only a purely mathematical meaning; Some mathematical definitions
of $\Scope(\cV)$ will be ``good definitions'' with respect to that
empirical meaning, but others will not. However, currently I have
no clear explanation of the empirical meaning of ``scope'', and
hence I cannot specify what definition of $\Scope(\cV)$ is a good
one. Thus I will search for ``good requirements'' for $\Scope(\cV)$
in a somewhat heuristic way, giving several working hypotheses such
as \ref{wh:Q}, \ref{wh:even}.

By Working Hypothesis \ref{wh:even}, for any $P\in\Scope(\cV)$,
the spectrum of $\Q_{\cfunc}(P)$ is the integers $\{k\in\Z:\,|k|\le n_{P}/2\}$.
However note that even if $n_{P}$ is odd, the local observable $\Q_{\cfunc}(P)$
is defined. In this case, the spectrum of $\Q_{\cfunc}(P)$ is the
set of half integers $\{k\in\Z+\frac{1}{2}:\,|k|\le n_{P}/2\}$, which
cannot be interpreted as the values of charge in the usual sense.
Probably the observable $\Q_{\cfunc}(A)$ has some physical/empirical
significance for every $A\in\FR(\cV)_{{\rm sa}}$, but there are many
cases where its meaning %
is not immediately clear. Similarly, every $P\in\FRP(\cV)$ might
have some significance as a ``scope'', whether $n_{P}$ is even
or odd. Hence it is desirable that Working Hypothesis \ref{wh:even}
is reasonably justified mathematically and/or empirically. Unfortunately
currently I do not have enough justification for Working Hypothesis
\ref{wh:even}; I will simply hypothesize it for the better interpretability
of $\Q_{\cfunc}(P)$. An attempt to justify Working Hypothesis \ref{wh:even}
will be given in Subsec.\,\ref{subsec:Charge-conjugation}.

{} %

For each $P\in\Scope(\cV)$, let $\SYM{\fA_{P}}{AP}$ denote the (finite-dimensional)
$C^{*}$-subalgebra of $\CAR(\cV)$ generated by $\bbOne$ and $\{\Psi^{*}(f)\Psi(g)|f,g\in\ran(P)\}$.
Let us call $\fA_{P}$ the \termi{algebra of observables in the scope
$P$}.

It is easy to see the following
\begin{lem}
Let $P\in\FRP(\cV)$. Then $[\Psi^{*}(f)\Psi(g),\d\Gamma(P)]=0$ for
all $f,g\in\ran(P)$.
\end{lem}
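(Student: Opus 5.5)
The natural route is to use the commutation relations for $\d\Gamma$ recorded just before Lemma~\ref{lem:E1EnSpec-1}, namely
\[
[\d\Gamma(A),\Psi^{*}(f)]=\Psi^{*}(Af),\qquad[\d\Gamma(A),\Psi(f)]=-\Psi(A^{*}f),
\]
valid for all $A\in\FR(\cV)$ and $f\in\cV$. Here $\d\Gamma(P)$ is what the statement writes as $\Gamma(P)$. Take $A=P\in\FRP(\cV)$; then $P^{*}=P$. For $f,g\in\ran(P)$ we have $Pf=f$ and $Pg=g$, so the relations give
\[
[\d\Gamma(P),\Psi^{*}(f)]=\Psi^{*}(f),\qquad[\d\Gamma(P),\Psi(g)]=-\Psi(g).
\]

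Next, apply the derivation (Leibniz) rule $[X,YZ]=[X,Y]Z+Y[X,Z]$ with $X=\d\Gamma(P)$, $Y=\Psi^{*}(f)$ and $Z=\Psi(g)$:
\[
[\d\Gamma(P),\Psi^{*}(f)\Psi(g)]=\Psi^{*}(f)\Psi(g)-\Psi^{*}(f)\Psi(g)=0 .
\]
Reversing the order of the commutator gives the claimed $[\Psi^{*}(f)\Psi(g),\d\Gamma(P)]=0$.

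As a sanity check, and as an alternative route if one does not want to rely on the general relations, the two commutators can be computed directly from the CAR. Choose an orthonormal basis $\{e_{i}\}_{i=1}^{n}$ of $\ran(P)$, so that $\d\Gamma(P)=\sum_{i}\Psi^{*}(e_{i})\Psi(e_{i})$. Using
\[
[ab,c]=a\{b,c\}-\{a,c\}b,
\]
one gets
\[
[\Psi^{*}(e_{i})\Psi(e_{i}),\Psi^{*}(f)]=\langle e_{i}|f\rangle\Psi^{*}(e_{i}).
\]
Summing over $i$ gives $\Psi^{*}(Pf)=\Psi^{*}(f)$. The computation for $\Psi(g)$ is analogous and gives $-\Psi(Pg)=-\Psi(g)$.

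There is no real obstacle here. The only point needing care is the sign and conjugate-linearity conventions, namely that $\Psi^{*}$ is linear and $\Psi$ is antilinear in the test function. These conventions are what make the two contributions appear with opposite signs, so that they cancel exactly.
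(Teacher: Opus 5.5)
Your proof is correct and is exactly the argument the paper has in mind when it says ``It is easy to see'': you apply the Leibniz rule to $\Psi^{*}(f)\Psi(g)$ using the relations $[\d\Gamma(A),\Psi^{*}(f)]=\Psi^{*}(Af)$ and $[\d\Gamma(A),\Psi(f)]=-\Psi(A^{*}f)$ recorded just before, together with $Pf=f$ and $P^{*}g=Pg=g$. One cosmetic slip: the statement itself already writes $\d\Gamma(P)$, and the bare $\Gamma(P)$ you mention is a typo in the preceding spectrum lemma.
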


Hence we also have
\begin{lem}
$\Q_{\cfunc}(P)$ is in the center of $\fA_{P}$, i.e., $[\Q_{\cfunc}(P),X]=0$
for all $X\in\fA_{P}$.
\end{lem}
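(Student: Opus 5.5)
The statement to prove is that $\Q_{\cfunc}(P)$ commutes with every element of $\fA_{P}$. I will first use the definition (\ref{eq:def:Qalpha}), $\Q_{\cfunc}(P)=\d\Gamma(P)-\cfunc(P)\bbOne$, to remove the scalar part. Since $\cfunc(P)\bbOne$ is a multiple of the identity, it commutes with everything. Hence $[\Q_{\cfunc}(P),X]=[\d\Gamma(P),X]$ for all $X\in\CAR(\cV)$, and it suffices to show that $\d\Gamma(P)$ commutes with every $X\in\fA_{P}$.

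Next I will reduce to generators. The commutant $\{\d\Gamma(P)\}'\cap\CAR(\cV)=\{X\in\CAR(\cV):[X,\d\Gamma(P)]=0\}$ has three relevant properties.
\begin{enumerate}
\item It is a unital subalgebra, by the Leibniz rule $[XY,Z]=X[Y,Z]+[X,Z]Y$.
\item It is closed in norm, since the map $X\mapsto[X,Z]$ is norm continuous.
\item It is closed under the involution. This holds because $\d\Gamma(P)$ is self-adjoint: $\d\Gamma(P)^{*}=\sum_{j}\Psi^{*}(e_{j})\Psi(Pe_{j})=\d\Gamma(P^{*})=\d\Gamma(P)$. Then $[X^{*},Z]=-[X,Z^{*}]^{*}=-[X,Z]^{*}$ for $Z=Z^{*}$.
\end{enumerate}
So this commutant is a $C^{*}$-subalgebra containing $\bbOne$. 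The algebra $\fA_{P}$ is generated by $\bbOne$ and the elements $\Psi^{*}(f)\Psi(g)$ with $f,g\in\ran(P)$. It is therefore enough to check that each of these generators commutes with $\d\Gamma(P)$, which is exactly the preceding lemma.

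The generator step is routine, and I would recheck it directly from the derivation identities stated earlier. Take $A=P$ in $[\d\Gamma(A),\Psi^{*}(f)]=\Psi^{*}(Af)$ and $[\d\Gamma(A),\Psi(g)]=-\Psi(A^{*}g)$. This gives
\[
[\d\Gamma(P),\Psi^{*}(f)\Psi(g)]=\Psi^{*}(Pf)\Psi(g)-\Psi^{*}(f)\Psi(Pg)=\Psi^{*}(f)\Psi(g)-\Psi^{*}(f)\Psi(g)=0,
\]
since $Pf=f$ and $Pg=g$ for $f,g\in\ran(P)$.

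There is no real obstacle. The only point needing care is closure under the involution, which relies on the self-adjointness of $\d\Gamma(P)$ checked above. Alternatively, one can note that $\fA_{P}$ is finite-dimensional, so it is just the linear span of finite products of the generators and $\bbOne$. In that case the Leibniz rule alone suffices and no closure argument is needed.
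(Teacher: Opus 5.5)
Your proposal is correct and follows essentially the same route as the paper. The paper obtains this lemma immediately (``Hence'') from the preceding lemma that $\d\Gamma(P)$ commutes with each generator $\Psi^{*}(f)\Psi(g)$, $f,g\in\ran(P)$, of $\fA_{P}$, with $\cfunc(P)\bbOne$ central. You spell out the passage from generators to all of $\fA_{P}$ (either via the commutant being a unital $C^{*}$-subalgebra, or via finite-dimensionality) and correctly re-verify the generator identity from the derivation formulas.
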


This can be understood as one of the rigorous expressions of the charge
conservation law, in terms of local observables. Intuitively I would
like to interpret this fact as follows: If $X$ is a (local) observable
in the scope $P$, then the measurement of $X$ cannot change the
value of the charge $\Q_{\cfunc}(P)$ in the scope $P$; Slightly
more generally, any local \emph{operation} in the scope $P$ cannot
change the value of $\Q_{\cfunc}(P)$.

\section{Quasi-free states, Lundberg's extensions}

If we fix a representation of $\CAR(\cV)$ on a Hilbert space $\cH$,
sometimes $\Q_{\cfunc}(A)$ is extended to some non-trace-class operators
$A$. This fact is found by Lundberg \cite{Lun1976}, and is employed
by Carey et al.~\cite{CHO1983,CR1987} to formulate the current algebras
in $(1+1)$ dimensions.

Assume that $\cV$ is a Hilbert space. Let $B(\cV)$ denote the algebra
of the bounded operators on $\cV$, and $B(\cV)_{{\rm sa}}\subset B(\cV)$
the subset of selfadjoint operators. First notice the following
\begin{lem}
\label{lem:alphaIP}Let $\cI$ be an involution on $\cV$. (Recall
that an antilinear operator $\cI$ on $\cV$ is called an \termi{involution}
if $\cI^{2}=\bOne$.) Let $P$ be an (orthogonal) projection on $\cV$
such that $\cI P=P\cI$. Then there exists a unique automorphism $\alpha_{\cI,P}$
of $\CAR(\cV)$ such that
\[
\alpha_{\cI,P}(\Psi(f))=\Psi((\bOne-P)f)+\Psi^{*}(\cI Pf).
\]
\end{lem}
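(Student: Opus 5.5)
The plan is to use the universal property of the CAR algebra. Since $\CAR(\cV)$ is the $C^{*}$-algebra generated freely by $\{\Psi(f)\}$ subject to the CAR, any linear assignment $f\mapsto\Phi(f)$ into a $C^{*}$-algebra satisfying the same relations extends uniquely to a $*$-homomorphism. Here $\cV$ is a Hilbert space, and uniqueness of such a $C^{*}$-algebra up to isomorphism holds by the standard result in \cite{BR97}. Uniqueness of $\alpha_{\cI,P}$ is then immediate, because the $\Psi(f)$ generate. So the proof reduces to two checks: that
\[
\Phi(f):=\Psi((\bOne-P)f)+\Psi^{*}(\cI Pf)
\]
defines a CAR field, and that the resulting endomorphism is bijective.

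First, linearity. The map $f\mapsto\Psi((\bOne-P)f)$ is linear. Since $\Psi^{*}$ is antilinear and $\cI$ is antilinear, $f\mapsto\Psi^{*}(\cI Pf)$ is also linear. Next, the anticommutators. I expand $\{\Phi(f),\Phi(g)\}$ into four terms. Those of type $\{\Psi,\Psi\}$ and $\{\Psi^{*},\Psi^{*}\}$ vanish. The cross terms give
\[
\langle(\bOne-P)g|\cI Pf\rangle+\langle(\bOne-P)f|\cI Pg\rangle,
\]
and both vanish because $\cI$ commutes with $P$, so $\cI Pf\in\ran P\perp\ran(\bOne-P)$. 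Similarly,
\[
\{\Phi(f),\Phi^{*}(g)\}=\langle(\bOne-P)f|(\bOne-P)g\rangle+\langle\cI Pg|\cI Pf\rangle,
\]
with the mixed terms vanishing for the same reason. To make the second term equal $\langle Pf|Pg\rangle$, I need $\cI$ to be antiunitary, i.e.\ $\langle\cI x|\cI y\rangle=\langle y|x\rangle$. I read that as part of ``involution'' (a conjugation); if it is not intended, this is the place where the assumption must be added. With it, the sum is $\langle f|g\rangle$, and the universal property yields a $*$-endomorphism $\alpha_{\cI,P}$.

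For bijectivity, I compute $\alpha_{\cI,P}^{2}$ on generators. From $\alpha(\Psi^{*}(h))=\Psi^{*}((\bOne-P)h)+\Psi(\cI Ph)$, I get
\[
\alpha^{2}(\Psi(f))=\alpha(\Psi((\bOne-P)f))+\alpha(\Psi^{*}(\cI Pf))=\Psi((\bOne-P)f)+\Psi(\cI P\cI Pf).
\]
Here I used $P(\bOne-P)=0$ and $(\bOne-P)\cI P=0$. Since $\cI P\cI P=\cI^{2}P=P$, this equals $\Psi(f)$. Hence $\alpha^{2}=\id$, so $\alpha$ is an automorphism; it is injective anyway by simplicity of the CAR algebra.

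The main obstacle is the antiunitarity of $\cI$ in the anticommutator $\{\Phi(f),\Phi^{*}(g)\}$; everything else is routine bookkeeping with $P\cI=\cI P$.
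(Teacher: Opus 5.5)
Your proof is correct. It differs from the paper's mainly in how invertibility is obtained. The paper sets $U:=\bOne-P$ and $V:=\cI P$. It records $V^{*}U=VU^{*}=0$ and $U^{*}U+V^{*}V=\bOne=UU^{*}+VV^{*}$, and then deduces the lemma from Theorem 5.2.5 of Bratteli--Robinson.

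You use only the uniqueness (universal property) of the CAR algebra. Your direct check of the anticommutators is the same algebra as the ``isometric'' half of those relations, $U^{*}U+V^{*}V=\bOne$ and $V^{*}U=0$. In place of the ``co-isometric'' half $UU^{*}+VV^{*}=\bOne$, you compute $\alpha_{\cI,P}^{2}=\id$ on generators, using $\cI^{2}=\bOne$ and $\cI P=P\cI$ directly. The paper's route is shorter and covers general Bogoliubov pairs. Yours is self-contained and shows in addition that $\alpha_{\cI,P}$ is involutive, i.e.\ $\alpha_{\cI,P}^{-1}=\alpha_{\cI,P}$.

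Your caveat about antiunitarity is justified, and the paper's proof tacitly needs it too. $V^{*}V=P\cI^{*}\cI P$ equals $P$ only if $\cI$ is isometric on $\ran P$, and $UU^{*}+VV^{*}=\bOne$ uses $\cI^{*}=\cI^{-1}=\cI$. Without antiunitarity the lemma is false. On $\cV=\C^{2}$ take $P=\bOne$ and $\cI(z_{1},z_{2}):=(2\ol{z_{2}},\tfrac{1}{2}\ol{z_{1}})$, which is antilinear with $\cI^{2}=\bOne$. Then $\{\alpha(\Psi(f)),\alpha(\Psi(f))^{*}\}=\|\cI f\|^{2}\bbOne$, which equals $\tfrac{1}{4}\bbOne$ for $f=(1,0)$. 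An automorphism would have to preserve $\{\Psi(f),\Psi^{*}(f)\}=\|f\|^{2}\bbOne$, so none exists. Since the paper only applies the lemma to complex conjugation, reading ``involution'' as ``conjugation'' is the right repair.

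One harmless slip concerns linearity conventions. With the paper's bracket, which is antilinear in the first slot, $\{\Psi(f),\Psi^{*}(g)\}=\langle f|g\rangle\bbOne$ forces $\Psi$ to be antilinear and $\Psi^{*}$ linear; compare $\d\Gamma(K)=\sum_{j}\Psi^{*}(Ke_{j})\Psi(e_{j})$. So both summands of $\Phi(f)$ are antilinear in $f$, not linear. All your argument needs is that $\Phi$ has the same type as $\Psi$, and that holds in either convention.
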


\begin{proof}
Set $U:=\bOne-P$ and $V:=\cI P$, then we see%
\[
V^{*}U=VU^{*}=0,\qquad U^{*}U+V^{*}V=\bOne=UU^{*}+VV^{*}.
\]
(Recall that for an antilinear operator $A$, its adjoint $A^{*}$
is defined by $\langle u|Av\rangle=\ol{\langle A^{*}u|v\rangle}\thinspace(=\langle v|A^{*}u\rangle)$.)
Hence the lemma follows from \cite[Theorem 5.2.5]{BR97}. %
\end{proof}
Let $T\in B(\cV)_{\mathrm{sa}}$ (``sa'' denotes ``selfadjoint'')
such that $0\le T\le\bOne$, where $\bOne$ denotes the identity in
$B(\cV)$. A \termi{gauge-invariant quasi-free state} $\omega_{T}$
of $\CAR(\cV)$ is uniquely defined by the $n$-point functions (cf.~\cite[Sec.17.2.3]{DG2022})
\[
\omega_{T}(\Psi(f_{n})^{*}\cdots\Psi(f_{1})^{*}\Psi(g_{1})\ldots\Psi(g_{m}))=\begin{cases}
0 & \text{if }n\neq m\\
\det(\langle g_{i}|Tf_{j}\rangle)_{i,j=1,...,n} & \text{if }n=m
\end{cases}
\]
We see $\omega_{T}(\d\Gamma(A))=\tr(TA).$

Let $(\SYM{\mathcal{H}_{T}}{HT},\SYM{\pi_{T}}{piT},\SYM{\Omega_{T}}{OmegaT})$
be the GNS representation of $\CAR(\cV)$ w.r.t.~the state $\omega_{T}$.
Let $\SYM{\Psi_{T}}{aT}(f):=\pi_{T}(\Psi(f))$.
\begin{example}
When $T=0$, the GNS representation $(\cH_{0},\pi_{0},\Omega_{0})$
is equivalent to the standard Fock representation of $\CAR(\cH)$
on the fermion Fock space $\cF(\cV)$ over $\cV$, so that $\Psi_{0}(f)\Omega_{0}=0$
for all $f\in\cV$.

Next consider the case where $T$ is a projection $P$ on $\cV$.
Let $\cI$ be an involution on $\cV$ such that $\cI P=P\cI$. Then,
by Lemma \ref{lem:alphaIP}, the operator $\Psi_{P}(\cdot)$ can be
related to $\Psi_{0}(\cdot)$ by
\[
\Psi_{P}(f)=\Psi_{0}^{*}(\cI Pf)+\Psi_{0}((\bOne-P)f),\qquad f\in\cV.
\]
 with $\Omega_{P}=\Omega_{0}$.%
{} \hfill $\Box$
\end{example}

{}

For $A\in\FR_{{\rm sa}}(\cV)$, let
\[
\cfunc_{T}(A):=\omega_{T}(\d\Gamma(A))
\]
\[
\SYM{\Q_{T}(A)}{QT}:=\pi_{T}(\d\Gamma(A))-\cfunc_{T}(A)\bbOne
\]
Then we have \cite[Sec.1]{Lun1976}
\begin{align}
\im[\Q_{T}(A),\Q_{T}(B)] & =\Q_{T}(\im[A,B])-2\Im\tr(TA(\bOne-T)B)\bbOne.\label{eq:=00005BQ(A),Q(B)=00005D}
\end{align}

\[
\SYM{W_{T}(A)}{WT}=e^{\im\cdot\Q_{T}(A)},\qquad A\in\FR(\cV)_{{\rm sa}}
\]
Let
\[
\SYM{O_{T}(\cV)}{OT}=\{A\in B(\cV)_{\mathrm{sa}};\tr(TA(\bOne-T)A)<\infty\}.
\]
The following theorem extends the maps $\Q_{T}$ and $W_{T}$ defined
on $\FR(\cV)_{{\rm sa}}$ to the maps $\tilde{\Q}_{T}$ and $\tilde{W}_{T}$
defined on $O_{T}(\cV)$, respectively (Lundberg \cite{Lun1976},
see also \cite{CHO1983,CR1987,Tha1992}):
\begin{thm}[Lundberg]
\label{thm:Lundberg}There exists a unique map $\tilde{W}_{T}:O_{T}(\cV)\to\pi_{T}(\CAR(\cV))''$
such that for all $A\in O_{T}(\cV)$,
\begin{enumerate}
\item $\{\tilde{W}_{T}(sA)|s\in\R\}$ is a strongly continuous unitary one-parameter
group,
\item $\Psi_{T}(e^{\im sA}f)=\tilde{W}_{T}(sA)\Psi_{T}(f)\tilde{W}_{T}(sA)^{-1},\qquad\forall f\in\cV,s\in\R,$
\item Let $\SYM{\tilde{\Q}_{T}(A)}{QTtil}$ denote the (possibly unbounded)
selfadjoint operator on $\cH_{T}$ such that $\tilde{W}_{T}(sA)=e^{\im s\tilde{\Q}_{T}(A)}$
($s\in\R$).%
{} Then $\Omega_{T}\in D(\tilde{\Q}_{T}(A))$, 
\item $\left\langle \Omega_{T}\big|\tilde{\Q}_{T}(A)\Omega_{T}\right\rangle =0$
\end{enumerate}
For $A,B\in O_{T}(\mathcal{H})$ the following identity holds
\[
\tilde{W}_{T}(A)\tilde{W}_{T}(B)\tilde{W}_{T}(A)^{-1}=\tilde{W}_{T}(e^{\im A}Be^{-\im A})e^{\im b_{T}(A,B)},
\]
where
\begin{equation}
\SYM{b_{T}}{bT}(A,B):=-2\Im\int_{0}^{1}\tr(TA(\bOne-T)e^{\im sA}Be^{-\im sA})\d s.\label{eq:bT}
\end{equation}
\end{thm}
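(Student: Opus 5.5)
The plan is to build $\tilde W_T$ by approximating $A\in O_T(\cV)$ with finite-rank selfadjoint operators and controlling the resulting Weyl-type unitaries $W_T(A_n)=e^{\im\Q_T(A_n)}$ in the GNS representation. The key quantitative tool is the Lundberg commutator formula (\ref{eq:=00005BQ(A),Q(B)=00005D}). Combined with the quasi-free two-point structure, it gives for $A\in\FR(\cV)_{\rm sa}$ the vacuum variance identity
\[
\|\Q_T(A)\Omega_T\|^{2}=\omega_T(\d\Gamma(A)^2)-\cfunc_T(A)^2=\tr(TA(\bOne-T)A).
\]
This follows directly from the $n$-point functions: the connected part of the four-point function is $\langle\cdot|T\cdot\rangle\langle\cdot|(\bOne-T)\cdot\rangle$. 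The identity shows that $A\mapsto\Q_T(A)\Omega_T$ is continuous in the seminorm $\|A\|_T:=\tr(TA(\bOne-T)A)^{1/2}$, which is exactly the seminorm defining $O_T(\cV)$.

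The steps, in order, are as follows.

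First, I verify the theorem on $\FR(\cV)_{\rm sa}$. There items (1)--(4) hold by construction, with item (2) coming from $\Psi(e^{A}f)=e^{\d\Gamma(A)}\Psi(f)e^{-\d\Gamma(A)}$ applied to $\im sA$. The cocycle identity is obtained by differentiating $s\mapsto W_T(sA)\Q_T(B)W_T(sA)^{-1}$ and using (\ref{eq:=00005BQ(A),Q(B)=00005D}). The scalar term integrates to $b_T$, as in (\ref{eq:bT}).

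Second, for general $A\in O_T(\cV)$, I choose finite-rank $A_n\to A$ strongly with $\|A_n\|\le\|A\|$ and $\|A_n-A\|_T\to0$. For example, I take $A_n=E_nAE_n$ with spectral projections $E_n$ of $T$ away from $\{0,1\}$, refined by finite-rank cutoffs; that these converge in $\|\cdot\|_T$ follows from $T(\bOne-T)$-weighted Hilbert--Schmidt estimates. Let $\cD$ be the span of vectors $\pi_T(X)\Omega_T$ with $X$ a polynomial in $\Psi(f),\Psi^*(f)$, $f\in\cV$. Using item (2) on these vectors gives
\[
W_T(sA_n)\pi_T(X)\Omega_T=\pi_T(X_{n,s})W_T(sA_n)\Omega_T,
\]
where $X_{n,s}$ is $X$ with each $f$ replaced by $e^{\im sA_n}f$, so $X_{n,s}\to X_s$ in norm. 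It therefore suffices to show that $W_T(sA_n)\Omega_T$ converges. For this I use the cocycle relation and the variance bound to estimate $\|W_T(sA_n)\Omega_T-W_T(sA_m)\Omega_T\|$ by $|s|\,\|A_n-A_m\|_T$ plus phase terms $b_T$. The phases are continuous in $\|\cdot\|_T$ and can be absorbed; one then fixes the phase by requiring item (4).

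Third, I define $\tilde W_T(sA)$ as the strong limit. The limit is unitary, forms a one-parameter group by the group law on approximants, and is strongly continuous. It lies in $\pi_T(\CAR(\cV))''$ as a strong limit of elements of $\pi_T(\CAR(\cV))$. Items (2)--(4) and the cocycle identity then pass to the limit, with the domain statement (3) coming from the uniform variance bound. Uniqueness follows because two maps satisfying (2) differ by an operator commuting with $\pi_T(\CAR(\cV))$, that is, one in the commutant. For such an operator, (1), (3) and (4), applied to the cyclic vector, force triviality. This requires an argument that the difference acts as a scalar phase, i.e.\ irreducibility-type considerations or the cyclic-separating structure, and the phase is then killed by (4).

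The main obstacle is the convergence in the second step, specifically controlling the phase cocycle so that $W_T(sA_n)\Omega_T$ is genuinely Cauchy rather than Cauchy only up to phases. I would first try a Gaussian computation: for finite-rank $A$ one has $\langle\Omega_T|W_T(sA)\Omega_T\rangle=\exp(-\tfrac{s^2}{2}\|A\|_T^2+O(\cdot))$, via a determinant formula for quasi-free states. This yields an explicit formula for $\langle W_T(A_n)\Omega_T|W_T(A_m)\Omega_T\rangle$ that is continuous in $\|\cdot\|_T$. If that formula proves unwieldy, I would fall back on the reduction to $T=P$ a projection (purification) together with the Shale--Stinespring criterion, where $\|A\|_T<\infty$ is exactly the Hilbert--Schmidt condition on the off-diagonal part $PA(\bOne-P)$.
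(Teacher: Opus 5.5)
The paper gives no proof of this theorem; it only cites Lundberg. Your sketch therefore has to stand on its own.

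Its finite-rank layer is fine. The variance identity $\|\Q_T(A)\Omega_T\|^2=\tr(TA(\bOne-T)A)$ is correct, and so is obtaining the cocycle with $b_T$ from Lundberg's commutator formula.

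\textbf{The convergence step.} The gap is in the step that carries all the content: the convergence of $W_T(sA_n)\Omega_T$. The cocycle identity controls $W_T(A)W_T(B)W_T(A)^{-1}$, not $W_T(sA_m)^{-1}W_T(sA_n)$. Since $A_n$ and $A_m$ do not commute, the latter is not $W_T(s(A_n-A_m))$ up to a phase, so no bound by $|s|\,\|A_n-A_m\|_T$ comes out.

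A Duhamel argument does work. Set $G(t):=W_T(tA_m)^{-1}W_T(tA_n)$ and $D_t:=e^{-\im tA_m}(A_n-A_m)e^{\im tA_m}$. Then $G'(t)=\im\,(\Q_T(D_t)+c(t))\,G(t)$, where $c(t)$ is a scalar Schwinger term with $|c(t)|\le 2|t|\,\|A_m\|_T\sup_{0\le u\le t}\|D_u\|_T$. Differentiating $G(t)^*\Omega_T$ (not $G(t)\Omega_T$) gives $\|W_T(sA_n)\Omega_T-W_T(sA_m)\Omega_T\|\le\int_0^s(\|D_t\|_T+|c(t)|)\,\d t$ for $s\ge0$, with no phase left to fix afterwards.

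You must then show $\|D_t\|_T\to0$. Here $\|X\|_T:=\|T^{1/2}X(\bOne-T)^{1/2}\|_2$ is not invariant under conjugation by $e^{\im tA_m}$. Crude bounds leave terms like $\|[T,e^{\im tA_m}]\|_2\,\|A_n-A_m\|$, which do not vanish because $A_n\to A$ only strongly. This needs a real argument, for example:
\begin{enumerate}
\item the identity $2\|X\|_T^2=\|[T,X]\|_2^2+2\|\sqrt{T(\bOne-T)}X\|_2^2$ for selfadjoint $X$;
\item Hilbert--Schmidt convergence of $[A_m,T]$ and $[A_m,\sqrt{T(\bOne-T)}]$;
\item the fact that a fixed Hilbert--Schmidt operator times a bounded strongly null sequence is Hilbert--Schmidt null.
\end{enumerate}
The same conjugation stability is what shows $e^{\im A}Be^{-\im A}\in O_T(\cV)$ and lets the cocycle identity pass to the limit. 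You take both of these for granted.

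\textbf{The approximants.} Your recipe $A_n=E_nAE_n$, with $E_n$ spectral projections of $T$ away from $\{0,1\}$, fails in exactly the case the paper uses: $T=E_-$ a projection. There these $E_n$ vanish, so $A_n=0\not\to A$. For $T=P$ you should compress by finite-rank $F_k=F_k^++F_k^-$ with $F_k^+\uparrow P$ and $F_k^-\uparrow\bOne-P$. For $T$ with continuous spectrum, no nonzero finite-rank projection commutes with $T$. There the existence of bounded finite-rank $A_n\to A$ strongly with $\|A_n-A\|_T\to0$ is a lemma you still owe. Alternatively, you can avoid approximation. For the purification $P_T$ of $\omega_T$ on $\cV\oplus\cV$, the identity above gives $\|A\oplus0\|_{P_T}=\|A\|_T$, so Shale--Stinespring yields the implementer directly. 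Membership in $\pi_T(\CAR(\cV))''$ then needs its own argument.

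\textbf{Uniqueness.} Irreducibility is unavailable, since the GNS representation of $\omega_T$ is reducible unless $T$ is a projection. Cyclicity or separation of $\Omega_T$, together with (3)--(4), does not force a unitary group in the commutant to be trivial. The correct argument runs as follows. The operator $\tilde W_T'(sA)^{-1}\tilde W_T(sA)$ lies in $\pi_T(\CAR(\cV))'$ by (2), and in $\pi_T(\CAR(\cV))''$ by the prescribed codomain. Hence it lies in the center, which is trivial because quasi-free states of the CAR algebra are factor states. So it is a continuous character $e^{\im cs}$, and (4) forces $c=0$.
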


Moreover, Eq.\,(\ref{eq:=00005BQ(A),Q(B)=00005D}) can be generalized
for $\tilde{\Q}_{T}$ \cite{Lun1976}. The last term $-2\Im\tr(TA(\bOne-T)B)$
in (\ref{eq:=00005BQ(A),Q(B)=00005D}) (replacing $\Q_{T}$ with $\tilde{\Q}_{T}$)%
{} is called the \termi{Schwinger term}. However notice the following
simple fact:
\begin{lem}
\label{lem:TAB}Let $T,A,B$ be bounded selfadjoint operators on a
Hilbert space $\cH$. Assume $[A,B]=0$.
\begin{enumerate}
\item If $A$ or $B$ is trace-class, then $\tr(TAB)\in\R,\ \tr(TATB)\in\R$,
and hence $\tr(TA(\bOne-T)B)\in\R$.
\item When neither $TAB$ nor $TATB$ is trace-class%
, there are some cases where $TA(\bOne-T)B$ is trace-class, but $\tr(TA(\bOne-T)B)\notin\R$.
\end{enumerate}
\end{lem}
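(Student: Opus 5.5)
For part (1), I will use only cyclicity of the trace, $\tr(XY)=\tr(YX)$ whenever $X$ is trace-class and $Y$ bounded, together with $\ol{\tr(X)}=\tr(X^{*})$. Suppose first that $A$ is trace-class. Then $TAB$, $TATB$, $BAT$ and $BTAT$ are all trace-class. For the first trace,
\[
\ol{\tr(TAB)}=\tr(BAT)=\tr(ABT)=\tr(TAB),
\]
where the middle step uses $[A,B]=0$ and the last uses cyclicity with the trace-class factor $AB$. For the second, $\ol{\tr(TATB)}=\tr(BTAT)=\tr(TATB)$, by cyclicity with the trace-class factor $TAT$; this step does not need commutativity. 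If instead $B$ is trace-class, the same computation applies with the roles adjusted: $\tr(BAT)=\tr(TBA)=\tr(TAB)$, and $\tr(BTAT)=\tr(TATB)$ by cycling the trace-class factor $B$. Finally, $\tr(TA(\bOne-T)B)=\tr(TAB)-\tr(TATB)$ is real, since both terms are trace-class.

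For part (2), I will construct an explicit counterexample modeled on the $(1+1)$-dimensional Schwinger term. Take $\cH=L^{2}(S^{1})=\ell^{2}(\Z)$ with orthonormal basis $e_{n}(\theta)=e^{\im n\theta}$. Let $T=P$ be the Hardy projection onto $\overline{\Span}\{e_{n}:n\ge0\}$, and let $A$, $B$ be multiplication by the real functions $2\cos\theta$ and $2\sin\theta$. These commute and are bounded and selfadjoint, with
\[
Ae_{n}=e_{n+1}+e_{n-1},\qquad Be_{n}=-\im e_{n+1}+\im e_{n-1}.
\]
Neither $PAB$ nor $PAPB$ is trace-class. Indeed, $PAB$ is $P$ times multiplication by $\sin2\theta$, and $PAPB$ differs from $PAB$ only by the finite-rank operator $PA(\bOne-P)B$; the diagonal entries $\langle e_{n}|PABe_{n}\rangle$ and the norms $\|PABe_{n}\|$ do not decay, so neither operator is compact.

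The main computation is $PA(\bOne-P)B$, and it should be done by tracking basis vectors. The operator $(\bOne-P)B$ followed by $PA$ can be nonzero only through the intermediate vector $e_{-1}$, because $A$ raises indices by at most one. Only $n=0$ and $n=-2$ produce a component along $e_{-1}$ under $B$: $(\bOne-P)Be_{0}=\im e_{-1}$ and $(\bOne-P)Be_{-2}=-\im e_{-1}$. Hence
\[
PA(\bOne-P)B=\im\,e_{0}\otimes e_{0}^{*}-\im\,e_{0}\otimes e_{-2}^{*},
\]
which has rank one, so it is trace-class. Its trace is $\langle e_{0}|\im e_{0}\rangle=\im\notin\R$, and this proves (2).

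The only delicate point is being careful that $PAB$ and $PAPB$ individually are genuinely not trace-class, and that the finite-rank bookkeeping in the last step is correct. Both reduce to elementary computations with the shifts above.
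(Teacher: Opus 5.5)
Your proof is correct. Part (1) is the cyclicity argument the paper calls obvious, and you correctly track which factor is trace-class in each case; commutativity is needed only for $\tr(TAB)$.

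For part (2) you take a different route from the paper. The paper carries out no computation. It points to the $(1+1)$-dimensional Dirac field with $T=E_{-}$, $A=\cJ(f)$, $B=\cJ(g)$ on $L^{2}(\R,\C^{2})$, and defers to Carey--Hurst--O'Brien for the nonvanishing of the Schwinger term, with the conditions on $f,g$ only sketched. Your example is the compactified chiral analogue of that one. The Hardy projection plays the role of the negative-energy projection of a massless chiral fermion, and $2\cos\theta$, $2\sin\theta$ play the role of two overlapping test functions. Write $M_{f}$ for multiplication by $f$ and $\hat{f}(j)$ for its Fourier coefficients. In general $\tr(PM_{f}(\bOne-P)M_{g})=\sum_{j\ge1}j\hat{f}(j)\hat{g}(-j)$, and for real $f,g$ its imaginary part is $-\frac{1}{4\pi}\int_{0}^{2\pi}f'g\,\d\theta$, the familiar Schwinger term. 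Your choice collapses this to a single rank-one contribution with trace $\im$.

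Your route gives a self-contained verification that can be checked line by line. That includes the conditions that $TAB$ and $TATB$ are not trace-class, which the paper never checks. The paper's route ties the counterexample directly to the physical current algebra of Section~\ref{sec:Current(1+1)}, where this term is what makes $j^{0}$ and $j^{1}$ fail to commute, but it outsources the computation.

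There are two small slips, neither affecting the result.
\begin{enumerate}
\item $(\bOne-P)Be_{-2}=-\im e_{-1}+\im e_{-3}$, not $-\im e_{-1}$. This is harmless because $PAe_{-3}=0$, so your formula for $PA(\bOne-P)B$ stands.
\item The diagonal entries $\langle e_{n}|PABe_{n}\rangle$ are identically zero, since $AB$ is multiplication by $2\sin2\theta$, which has mean zero. So they cannot show anything. The non-compactness of $PAB$ should rest only on $\|PABe_{n}\|=\sqrt{2}$ for $n\ge2$ together with $e_{n}\to0$ weakly. The claim for $PAPB$ then follows as you say.
\end{enumerate}
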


\begin{proof}
(1) is obvious. An example of (2) shall be given in Section \ref{sec:Current(1+1)}.
\end{proof}
Therefore (\ref{eq:=00005BA,B=00005D=00003D0}) does not hold for
$\tilde{\Q}_{T}$. That is, we find a rather surprising phenomenon
that even if $A,B\in O_{T}(\cV)$ satisfy $AB=BA$, $\tilde{\Q}_{T}(A)$
and $\tilde{\Q}_{T}(B)$ may not commute. In other words, Lundberg's
extension $\tilde{\Q}_{T}$ was made possible at the cost of the commutativity
(\ref{eq:=00005BA,B=00005D=00003D0}). Algebraically, this amounts
to a central extension of an algebraic structure (of group or Lie
algebra). Eq.\,(\ref{eq:bT}) implies that if $AB=BA$,
\begin{equation}
b_{T}(A,B)=-2\Im\tr(TA(\bOne-T)B),\label{eq:bt-commute}
\end{equation}
which equals the Schwinger term. Lemma \ref{lem:TAB}(2) says that
there are some cases where $AB=BA$ but $b_{T}(A,B)\neq0$.

\section{Current algebra in $(1+1)$-dimensions}

\label{sec:Current(1+1)}

A physically significant example of Lemma \ref{lem:TAB}(2) is given
in the $(1+1)$-dimensional Dirac field as follows \cite[Sec.3]{CHO1983}.
Let $\gamma^{\mu}$ ($\mu=0,1$) be Dirac gamma matrices in $(1+1)$-dimensions,
for example,
\[
\gamma^{0}:=\begin{pmatrix}0 & 1\\
1 & 0
\end{pmatrix},\qquad\gamma^{1}:=\begin{pmatrix}0 & -1\\
1 & 0
\end{pmatrix},\qquad\text{with }\gamma^{5}:=\gamma^{0}\gamma^{1}=\begin{pmatrix}1 & 0\\
0 & -1
\end{pmatrix}.
\]
Let $\cV:=L^{2}(\R,\C^{2})\cong L^{2}(\R)\oplus L^{2}(\R)$, and consider
$\CAR(\cV)$.%
{} Let $m\ge0$. Define the Dirac Hamiltonian (without second quantization)
on $\cV$ by
\begin{equation}
H:=-\im\hbar c\gamma^{5}\frac{\d}{\d x}+mc^{2}\gamma^{0}\quad\text{ or }\quad H:=-\im\gamma^{5}\frac{\d}{\d x}+m\gamma^{0}\text{ if we set }c=\hbar=1.\label{eq:2D-Hamiltonian}
\end{equation}
In the following we set $c=\hbar=1$ unless otherwise stated. The
corresponding positive (resp.~negative) energy projection $E_{+}$
(resp.~$E_{-}$) is defined by 
\begin{equation}
\SYM{E_{\pm}}{E+-}:=\frac{1}{2}\left(\bOne\pm\frac{H}{|H|}\right).\label{eq:def:E+-}
\end{equation}
For $f=(f_{0},f_{1})\in\cS(\R,\R^{2})$ ($\R^{2}$-valued smooth functions
of rapid decrease), define the multiplication operator $\cJ(f)$ by
\[
(\SYM{\cJ(f)}{J(f)}g)(x):=(f_{0}(x)+\gamma^{5}f_{1}(x))g(x),\qquad g\in L^{2}(\mathbb{R},\mathbb{C}^{2}).
\]
Then $T:=E_{-}$, $A:=\cJ(f)$ and $B:=\cJ(g)$ ($f,g\in\cS(\R,\R^{2})$)
give an example Lemma \ref{lem:TAB}(2), when $f$ and $g$ satisfy
some suitable conditions ($f,g\not\equiv0$, $f\neq g$, $\supp(f)\cap\supp(g)\neq\emptyset$,
etc). The GNS representation $(\cH_{T},\pi_{T},\Omega_{T})$ of $\CAR(\cV)$
is equivalent to the usual Fock representation of the (second quantized)
Dirac field.

The selfadjoint operator $\tilde{\Q}_{E_{-}}(\cJ(f))$ is shown \cite[Sec.5]{CHO1983}
to be identified with the smeared current operator $j(f)$ formally
defined as 
\begin{equation}
j(f):=j^{0}(f_{0})+j^{1}(f_{1}),\qquad j^{\mu}(f_{\mu}):={\displaystyle \int_{-\infty}^{\infty}}\wick{\overline{\Psi}(x)\gamma^{\mu}\Psi(x)}f_{\mu}(x)\d x,\qquad\mu=0,1.\label{eq:1.1}
\end{equation}
Therefore Lundberg's Theorem \ref{thm:Lundberg} leads to a rigorous
formulation of the current algebras in $(1+1)$-dimensions. 

\paragraph{The disadvantage of the approach of Lundberg and Carey et al.~to
current algebras.}

Thus we reviewed the approach to current algebras in $(1+1)$-dimensions
\cite{CHO1983,CR1987} based on Lundberg's Theorem \ref{thm:Lundberg}.
However, we find that $j^{0}$ and $j^{1}$ satisfy the CCR $[j^{0}(f),j^{1}(g)]=\im B(f,g)$,
where $B$ is a bilinear form, which is nonzero due to the Schwinger
term \cite[Sec.5]{CHO1983}. Thus we are forced to be puzzled by Questions
\ref{que:integer}, \ref{que:conservation}. Furthermore, it does
not seem that we can also apply Theorem \ref{thm:Lundberg} to the
definition the current algebras in $(3+1)$-dimensions.

Viewed from a broader perspective, the fact that the Lundberg extension
$\tilde{\Q}_{\cfunc}$ is representation-dependent makes it difficult
for us to generalize this approach to QFT in curved spacetime (QFTCS),
because the mainstream of QFTCS appears to tend to the representation-independent
(e.g., abstract $C^{*}$-algebraic) approach, see e.g., \cite{Dim1982,Wal1994,BFV2003,BGP2007,BF2009,HW2015,KM2015,Ger2019}.
Furthermore, since Lundberg's Theorem \ref{thm:Lundberg} relies upon
quasi-free states, it is not clear whether it can be generalized to
interacting (non-quasi-free) fields even in (flat) Minkowski spacetime.

Therefore, it seems that for a general theory (containing e.g., interacting
fields and/or curved spacetimes), it is a better way to consider the
scope-local charge $\Q_{\cfunc}$ based on the original Araki\textendash Wyss
map $\d\Gamma$, not the Lundberg extension $\tilde{\Q}_{\cfunc}$
of $\Q_{\cfunc}$. In the next section I will return to $\Q_{\cfunc}$. 

\section{Scope-local charge $\protect\Q_{\protect\cfunc}$}

\label{sec:Scope-local}

\subsection{Charge conjugation}

\label{subsec:Charge-conjugation}

Next we examine the problem to describe the notion of charge in terms
of local observables with Working Hypotheses \ref{wh:Q}, \ref{wh:even}
without relying upon the notion of current. We work in a charged spinor
field in $(1+1)$ or $(3+1)$-dimensions (mainly the latter). We consider
the operator $\Q_{\cfunc}(A)$ based on the original Araki\textendash Wyss
map $\d\Gamma(A)$ defined only for $A\in\FR(\cV)$ (or the trace-class
extension of it), not the Lundberg extension $\tilde{\Q}_{\cfunc}$
of $\Q_{\cfunc}$.

Let $\cV:=C_{\cpt}^{\infty}(\R^{d},\C^{4})$ ($d=1,3$) (compactly
supported $\C^{4}$-valued smooth functions on $\R^{d}$) with the
usual $L^{2}$ inner product $\langle\cdot|\cdot\rangle$, so that
its completion is $\ol{\cV}=L^{2}(\R^{d},\C^{4})$. Here, $\C^{4}$
is given the usual inner product, and hence $\langle f|g\rangle=\int_{\R^{d}}\d^{d}\x\,\sum_{k=1}^{d}\ol{f_{k}(\x)}g_{k}(\x)$,
$f,g\in\ol{\cV}$.

In $(3+1)$-dimensions, the Dirac matrices $\balpha=(\alpha_{1},\alpha_{2},\alpha_{3})$
and $\beta$ are $4\times4$ matrices satisfying
\[
\{\alpha_{i},\alpha_{k}\}=2\delta_{ik}\bbOne,\qquad\{\alpha_{i},\beta\}=0,\qquad\beta^{2}=\bbOne.\qquad i,k=1,2,3.
\]
The Dirac gamma matrices are $\gamma^{0}:=\beta$, $\gamma^{k}:=\beta\alpha_{k}$,
satisfying $\{\gamma^{\mu},\gamma^{\nu}\}=2g^{\mu\nu}\bbOne$. The
Majorana representation is given by
\[
\beta=\beta^{{\rm M}}:=\begin{pmatrix}0 & \im\bbOne_{2}\\
-\im\bbOne_{2} & 0
\end{pmatrix},\quad\alpha_{i}=\alpha_{i}^{{\rm M}}:=\begin{pmatrix}\sigma_{i} & 0\\
0 & -\sigma_{i}
\end{pmatrix}\text{ for }i=1,3,\quad\alpha_{2}=\alpha_{2}^{{\rm M}}:=\begin{pmatrix}0 & \bbOne_{2}\\
\bbOne_{2} & 0
\end{pmatrix}
\]
where $\sigma_{i}$ $(i=1,2,3)$ are the standard Pauli matrices,%
{} and $\bbOne_{2}$ is the $2\times2$ identity matrix. In this case,
any element of $\gamma^{\mu}=\gamma^{{\rm M},\mu}$ ($\mu=0,...,3$)
is purely imaginary, and the \termi{charge conjugation} $\cC=\SYM{\cC^{{\rm M}}}{CM}$
is defined to be the complex conjugation $\psi\mapsto\ol{\psi}$ on
$\C^{4}$ (warning: $\ol{\psi}$ does not refer to the Dirac adjoint
of $\psi$, in this paper). We can obtain any other representation
of the gamma matrices and the charge conjugation from the Majorana
representation by $\gamma^{\mu}=S\gamma^{{\rm M},\mu}S^{-1}$, $\cC=S\cC^{{\rm M}}S^{-1}$
for some invertible $4\times4$ matrix $S$.

In $(1+1)$-dimensions, the Majorana representation of the $2\times2$
gamma matrices are given by
\[
\beta=\gamma^{0}=-\sigma_{2}=\begin{pmatrix}0 & \im\\
-\im & 0
\end{pmatrix},\quad\alpha_{1}=\gamma^{0}\gamma^{1}=\gamma^{5}=\sigma_{1}=\begin{pmatrix}0 & 1\\
1 & 0
\end{pmatrix},\quad\gamma^{1}=\beta\alpha_{1}=\im\sigma_{3}=\begin{pmatrix}\im & 0\\
0 & -\im
\end{pmatrix},
\]
and similarly in this representation the charge conjugation $\cC$
is given by the complex conjugate on $\C^{2}$.

If we consider the ($3+1$)-dimensional (free) Dirac field, the Hamiltonian
(without second quantization) is given by
\begin{equation}
H=-\im\hbar c\balpha\cdot\nabla+\beta mc^{2}=-\im\hbar c\sum_{k=1}^{3}\gamma^{0}\gamma^{k}\di_{k}+\gamma^{0}mc^{2}.\label{eq:4D-Hamiltonian}
\end{equation}
In $(1+1)$-dimensions, it is given by (\ref{eq:2D-Hamiltonian}).
In any case, the corresponding positive (resp.~negative) energy projection
$E_{+}$ (resp.~$E_{-}$) is defined by (\ref{eq:def:E+-}).

Let $\cfunc:\FR(\cV)\to\C$ be a linear functional%
, and recall the definition (\ref{eq:def:Qalpha}) of $\Q_{\cfunc}$.
If the charge conjugation $\cC$ is understood to act on $\cV$ (or
$\ol{\cV}$) pointwise, it is an involution on $\cV$ (or $\ol{\cV}$).

In the following, we may assume the Majorana representation of gamma
matrices for simplicity, so that $\cC$ equals the complex conjugation.
Define the automorphism $\SYM{\scC}C$ on $\CAR(\cV)$ by
\[
\scC(\Psi(f)):=\Psi^{*}(\cC f)\ (=\Psi^{*}(\ol f)),\qquad f\in\cV,
\]
cf.~\cite[Theorem 5.2.5(5)]{BR97}. Let us call $\scC$ the \termi{charge conjugation automorphism}.
Note that $\scC(\Psi^{*}(f))=\scC(\Psi(f))^{*}=(\Psi^{*}(\cC f))^{*}=\Psi(\cC f)$.
\begin{lem}
\label{lem:C(Q)}For any $P\in\FRP(\cV)$, we have $\scC\left(\Q_{\cfunc}(P)\right)=-\Q_{\cfunc}(\cC P\cC).$
\end{lem}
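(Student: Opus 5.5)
The plan is a direct computation: express $\Q_{\cfunc}(P)$ through an orthonormal basis of $\ran(P)$, push $\scC$ through using the anticommutation relations, and compare the resulting constant with $\cfunc(\cC P\cC)$. Throughout I take $\cfunc$ as in Working Hypothesis \ref{wh:even}(1), i.e.\ $\cfunc(A)=\frac{1}{2}\Tr A$. The identity will in fact hold exactly when $\cfunc(P)+\cfunc(\cC P\cC)=\Tr P$, and the computation will show why.

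First I check that $\cC P\cC\in\FRP(\cV)$. Since $\cC$ is pointwise complex conjugation (Majorana representation), it maps $\cV=C_{\cpt}^{\infty}(\R^{d},\C^{4})$ onto itself. It is antiunitary with $\cC^{2}=\bOne$, so $\cC P\cC$ is linear. It is idempotent, because $\cC P\cC\,\cC P\cC=\cC P^{2}\cC$. It is selfadjoint, because $\langle f|\cC P\cC g\rangle=\ol{\langle\cC f|P\cC g\rangle}$, and this expression is symmetric by selfadjointness of $P$. Next, let $\{e_{j}\}_{j=1}^{n}$ be an orthonormal basis of $\ran(P)$, with $n=n_{P}$. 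Then $\{\cC e_{j}\}$ is orthonormal, since $\langle\cC e_{i}|\cC e_{j}\rangle=\ol{\langle e_{i}|e_{j}\rangle}$. Moreover $\ran(\cC P\cC)=\cC\ran(P)$, so $\cC P\cC=\sum_{j}\cC e_{j}\otimes(\cC e_{j})^{*}$. Consequently $\d\Gamma(\cC P\cC)=\sum_{j}\Psi^{*}(\cC e_{j})\Psi(\cC e_{j})$ and $\Tr(\cC P\cC)=n_{P}$.

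Now I apply $\scC$ to $\d\Gamma(P)=\sum_{j}\Psi^{*}(e_{j})\Psi(e_{j})$. Since $\scC$ is a homomorphism with $\scC(\Psi^{*}(f))=\Psi(\cC f)$ and $\scC(\Psi(f))=\Psi^{*}(\cC f)$, this gives $\sum_{j}\Psi(\cC e_{j})\Psi^{*}(\cC e_{j})$. The CAR relation with $\|\cC e_{j}\|=1$ gives $\Psi(\cC e_{j})\Psi^{*}(\cC e_{j})=\bbOne-\Psi^{*}(\cC e_{j})\Psi(\cC e_{j})$. Hence
\[
\scC(\d\Gamma(P))=n_{P}\bbOne-\d\Gamma(\cC P\cC).
\]
Since $\scC$ is unital, it follows that $\scC(\Q_{\cfunc}(P))=-\d\Gamma(\cC P\cC)+(n_{P}-\cfunc(P))\bbOne$.

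The only remaining point, and the one place where the choice of $\cfunc$ matters, is matching the constant. I need $n_{P}-\cfunc(P)=\cfunc(\cC P\cC)$. With $\cfunc=\frac{1}{2}\Tr$ and $\Tr(\cC P\cC)=\Tr P=n_{P}$, both sides equal $n_{P}/2$, and the lemma follows. I would add a remark that for a general linear functional $\cfunc$ the identity holds if and only if $\cfunc(P)+\cfunc(\cC P\cC)=\Tr P$. This gives a natural charge-conjugation motivation for Working Hypothesis \ref{wh:even}(1), which is presumably the intended use of the lemma in Subsec.\,\ref{subsec:Charge-conjugation}.
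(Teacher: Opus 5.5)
Your proof is correct and is essentially the paper's argument: expand $\d\Gamma(P)$ in an orthonormal basis of $\ran P$, apply $\scC$, and use $\Psi(\cC e_i)\Psi^{*}(\cC e_i)=\bbOne-\Psi^{*}(\cC e_i)\Psi(\cC e_i)$. The paper's proof also silently takes $\cfunc(P)=\frac{n}{2}$. Your explicit choice $\cfunc=\frac{1}{2}\Tr$, together with your remark that for a general $\cfunc$ the identity holds iff $\cfunc(P)+\cfunc(\cC P\cC)=\Tr P$, makes precise a hypothesis the statement leaves implicit.
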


\begin{proof}
Let $\{e_{1},...,e_{n}\}$ be an orthonormal basis of $\ran P$. Then
we have
\begin{align*}
\scC\left(\Q_{\cfunc}(P)\right) & =\scC\left(\left(\sum_{i=1}^{n}\Psi^{*}(e_{i})\Psi(e_{i})\right)-\frac{n}{2}\bbOne\right)=\sum_{i=1}^{n}\left(\Psi(\cC e_{i})\Psi^{*}(\cC e_{i})-\frac{1}{2}\bbOne\right)\\
 & =\sum_{i=1}^{n}\left(-\Psi^{*}(\cC e_{i})\Psi(\cC e_{i})+\frac{1}{2}\bbOne\right)=-\Q_{\cfunc}(\cC P\cC).
\end{align*}
\end{proof}
By Lemma \ref{lem:C(Q)}, seemingly it is natural to assume that every
$P\in\Scope(\cV)$ is $\cC$-invariant, i.e. $\cC P\cC=P$, in other
words, $\cC\ran P=\ran P$, because in this case we have $\scC(\Q_{\zeta}(P))=-\Q_{\zeta}(P)$,
so that the notion of ``charge'' appears to best fit with that of
``charge conjugation''. However, this assumption is not consistent
with the time evolution. For example, consider the Dirac field. Then
it is natural to think that the time evolution $P(t)$ ($t\in\R$)
of $P$ is given by $P(t):=e^{-\im tH/\hbar}Pe^{\im tH/\hbar}$ where
$H$ is the Dirac Hamiltonian defined by (\ref{eq:2D-Hamiltonian})
or (\ref{eq:4D-Hamiltonian}). However we see $\cC H\cC=-H$, and
hence $\cC P(t)\cC=(\cC P\cC)(-t)$ where $(\cC P\cC)(t):=e^{-\im tH/\hbar}\cC P\cC e^{\im tH/\hbar}$.%
{} Thus $\cC P\cC=P$ implies $\cC P(t)\cC=P(-t)\neq P(t)$ in general.
A simple alternative way is to understand that the usual charge conjugation
$\cC$ is not the same as the ``charge reversal'' $\tilde{\cC}$,
given by $P(t)\mapsto(\cC P\cC)(t)$, that is, the composition of
the charge conjugation and the time reversal $\cT:P(t)\mapsto P(-t)$.
However, more precisely we should write $\cT(P)(t):=P(-t)$, and so
$\cT$ is not defined for a single operator $P\in\FRP(\cV)$, but
for an operator-valued function $P(\cdot):\R\to\FRP(\cV)$; Especially,
there is no single operator $\cT$ on $\cV$ such that $P(-t)=\cT P(t)\cT^{-1}$
for all $t\in\R$, and the same applies to $\tilde{\cC}$. Probably,
it is not a natural manner to try to formulate the charge conjugation
(or the charge reversal) on the CAR algebra of a time-zero field.
Although it is evident that the spacetime-algebraic approach of Haag\textendash Araki\textendash Kastler
(i.e., of causal nets \cite{Haa96,Araki99}) is more natural for this
purpose, the general consideration of electric charge in terms of
causal nets seems extremely difficult.

For the (free) Dirac field, a compromise between the time-zero field
formalism and the spacetime field formalism can be given as follows.
For each $t\in\R$, let $\cV_{t}$ be a copy of $\cV$. More precisely
we can set $\cV_{t}:=\{F:\R\to\cV|t'\neq t\then F(t')=0\}$. For $t_{i}\in\R$
and $f_{i}\in\cV_{t_{i}}$, $i=1,2$, let $\id_{t_{1},t_{2}}:\cV_{t_{1}}\to\cV_{t_{2}}$
denote the canonical bijection. Then the time reversal $\cV_{t}\to\cV_{-t}$
is given by $\id_{t,-t}$. If each element of $\cV_{t}$ is expressed
as a function $F:\R\to\cV$, the time reversal $\cT$ is defined by
$\cT(F)(t):=F(-t)$. A ``scope at time $t$'' is identified as an
element $P$ of $\FRP(\cV_{t})$, and similarly $P$ can be seen as
a function $\R\to\FRP(\cV)$ such that $t'\neq t\then P(t')=0$. Thus
the time reversal $\cT(P)(t):=P(-t)$ makes sense, and hence we also
can define the ``charge reversal'' $\tilde{\cC}:P(t)\mapsto(\cC P\cC)(t)$. 

Consider the unital $C^{*}$-algebra $\widetilde{\CAR}(\cV)$ ``freely''
generated by the elements $\{\Psi(f)|f\in\cV_{t},t\in\R\}$ satisfying
the following anticommutation relations:
\[
\{\Psi(f_{1}),\Psi(f_{2})\}=\{\Psi^{*}(f_{1}),\Psi^{*}(f_{2})\}=0,\qquad\{\Psi(f_{1}),\Psi^{*}(f_{2})\}=\langle f_{1}|e^{\im(t_{2}-t_{1})H/\hbar}f_{2}\rangle\bbOne.
\]
Then the Dirac field is described in $\widetilde{\CAR}(\cV)$; In
fact, in the representation-independent algebraic approach to QFT
(e.g., \cite{BFV2003,BGP2007,BF2009,HW2015,KM2015,Ger2019}), such
an abstract algebra can be seen as the very \emph{definition} of the
Dirac field. Therefore, if $\CAR(\cV)$ is understood to be the ``time-zero
subalgebra'' of $\widetilde{\CAR}(\cV)$, one may expect that the
following can be justified:
\begin{lyxalgorithm}
\label{wh:C-inv}Let $P\in\FRP(\cV)$. Then $P\in\Scope(\cV)$ only
when $P$ is $\cC$-invariant (or equivalently, $\tilde{\cC}$-invariant).
\end{lyxalgorithm}

Then our definition of the ``scope-local charge'' $\Q_{\zeta}(\cdot)$
will be consistent with the conventional notion of charge conjugation
(or reversal). 

Note that there are $\cC$-invariant projections $P\in\FRP(\cV)$
such that $\Tr P$ is odd (e.g., $\Tr P=1$), and hence Working Hypothesis
\ref{wh:C-inv} does not explain our seemingly \emph{ad hoc} requirement
that $\Tr P$ is even (Working Hypothesis \ref{wh:even}). Next I
will make an attempt to explain that requirement.

\begin{defn}
\label{def:decomp}A projection $P\in\FRP(\cV)$ is called \termi{$\cC$-compatible}
if there exist $P_{\pm}\in\FRP(\cV)$ such that $P=P_{+}+P_{-}$ and
$\cC P_{\pm}\cC=P_{\mp}$.
\end{defn}

\begin{lyxalgorithm}
\label{wh:C-compat}Let $P\in\FRP(\cV)$. Then $P\in\Scope(\cV)$
only when $P$ is $\cC$-compatible.
\end{lyxalgorithm}

An intuitive and rough argument for this hypothesis is given as follows:
When we consider the Dirac field, if one wish to define the ``positron
part'' and the ``electron part'' of the ``scope'' $P$, it is
natural to define them to be $E_{+}PE_{+}$ and $E_{-}PE_{-}$, respectively
($E_{\pm}$ was defined by (\ref{eq:def:E+-})). However, $\Q_{\cfunc}(E_{\pm}PE_{\pm})$
is not a \emph{local} observable, while $\Q_{\cfunc}(P)$ is a local
observable for all $P\in\FRP(\cV)$. On the other hand, usually we
believe that the charge of a positron/electron is detectable \emph{locally},
e.g., as a trajectory in a cloud chamber. Hence one would like to
have an alternative definition of the \emph{local} positron/electron
parts of $P$. However notice that it is likely that the concept of
(spatially localizable) particle is no longer maintainable in relativistic
QFT, as I mentioned in Subsec.\,\ref{subsec:particle}.

This somewhat paradoxical situation have been wanting some explanation
with a logically and/or mathematically consistent formulation. One
of possible interpretations will be as follows: For each (local) positron/electron
detector, we have a \emph{temporary} (or \emph{ad hoc}) interpretation
that the detection can be decomposed \emph{locally} into the positron
and electron parts, so that the total detected charge is the sum of
the total charge of detected positrons and that of detected electrons.
Conversely, perhaps a (local) measurement of a quantity can be naturally
interpreted as a charge measurement only if it can be understood to
be decomposed locally into the positive-charge and negative-charge
parts. If this temporary interpretation is expressed as $P=P_{+}+P_{-}$,
the local observable $\d\Gamma(P_{+})$, whose spectrum is $\{0,1,...,n\}$
($n:=\Tr P_{+}=\Tr P_{-}=\frac{1}{2}\Tr P$), is interpreted as the
number (and also the total charge) of positrons in the scope $P$.
On the other hand, the local observable $n\bbOne-\d\Gamma(P_{-})$,
whose spectrum is $\{0,1,...,n\}$, is interpreted as the number of
the electrons in the scope $P$; Equivalently, $\d\Gamma(P_{-})-n\bbOne$
is interpreted as the total charge of electrons in $P$, so that the
total charge in $P$ is
\[
\Q_{\cfunc}(P)=\d\Gamma(P_{+})+\left(\d\Gamma(P_{-})-n\bbOne\right)=\d\Gamma(P)-n\bbOne.
\]
Thus we have $\cfunc(P)=\frac{1}{2}\Tr P$ for $P\in\Scope(\cV)$,
and so we can set $\cfunc(A):=\frac{1}{2}\Tr A$ for all $A\in\FR(\cV)$.\footnote{However note that for $A\notin\Scope(\cV)$, it is not clear whether
the value of $\cfunc(A)$ has some empirical meaning; While $\Q_{\cfunc}(A)$
is a local observable for all $A\in\FR(\cV)_{{\rm sa}}$, the meaning
of $\Q_{\cfunc}(A)$ is not immediately clear if $A\notin\Scope(\cV)$.
In any way, for the purposes of this paper, it is sufficient to determine
the value of $\cfunc(A)$ only for $A\in\Scope(\cV)$.} Although it seems that there exists no unique canonical decomposition
$P=P_{+}+P_{-}$, the definition of $\Q_{\cfunc}$ depends only on
$\cfunc$, not on the decomposition. In other words, Working Hypothesis
\ref{wh:C-compat} uniquely determines ``the total charge in the
scope $P$'', $\Q_{\cfunc}(P)$, independently of the decomposition
$P=P_{+}+P_{-}$.

For a fixed inertial observer, a relatively natural method of such
decomposition may be the eigenprojections $E_{\beta\pm}$ of $\beta=\gamma^{0}$,
seen as a multiplication operator on $\ol{\cV}=L^{2}(\R^{d},\C^{4}).$
We can understand $E_{\beta\pm}$ as the classical approximations
($\hbar\to0$), or the non-relativistic approximation $(c\to+\infty)$
of the projections $E_{\pm}$ defined by (\ref{eq:def:E+-}), viewed
from an inertial observer. If one dislikes to change the value the
universal constants $\hbar$ and $c$, instead one can consider the
low-momentum ($\|\p\|\approx0$) and/or the high-mass ($m\gg0$) approximations.
In this case, the scope $P$ should commute with $E_{\beta\pm}$,
and we can set $P_{\pm}=PE_{\beta\pm}$. However I will not argue
that this is the only possible way of the decomposition. Since $E_{\beta\pm}$
is not Lorentz invariant, we should not require the condition $[P,E_{\beta\pm}]=0$
as a general principle; This condition will be significant only for
a fixed inertial observer.

\subsection{Probabilistic law of scope-local charge}

Let $\cfunc(A):=\frac{1}{2}\Tr A$, and define $\Scope(\cV)$ to be
the set of $\cC$-compatible elements of $\FRP(\cV)$.

How can we describe the physical/empirical laws in terms of the scope-local
charges $\Q_{\zeta}(P)$ ($P\in\Scope(\cV)$)? I propose that it should
be described in terms of \emph{prior conditional probability} defined
below. This idea was developed in \cite{Yam2025,Yam2026b,Yam2026c},
especially for QFT in curved spacetime. An example of the physical/empirical
law stated with the notion of prior conditional probability shall
be given in Theorem \ref{thm:P(Q1|Q1)} below.

{} 

Recall that the \termi{algebra of observables in the scope $P$}
is the (finite-dimensional) $C^{*}$-subalgebra $\SYM{\fA_{P}}{AP}$
generated by $\bbOne$ and $\{\Psi^{*}(f)\Psi(g)|f,g\in\ran P\}$.
Since $\fA_{P}$ is isomorphic to a finite-dimensional matrix algebra,
the usual trace $\Tr$ of matrices induces the canonical trace $\Tr_{P}$
on $\fA_{P}$.

Let $P\in\Scope(\cV)$. Let $E_{1},...,E_{n}$ be projections in $\fA_{P}$.
Define the \termi{prior probability} $\Prob_{P}(E_{1}\cdots E_{n})$
in the scope $P$ by
\[
\Prob_{P}(A):=\frac{\Tr_{P}A^{*}A}{\Tr_{P}\bbOne},\qquad A:=E_{1}\cdots E_{n}.
\]
When we consider the limit $\Tr P\to\infty$ in some sense, the above
prior probability often converges to $0$, and so the value of $\Prob_{P}(E_{1},...,E_{n})$
is considered to have little empirical meaning in general. Empirically
more meaningful probability will be the following. Let $E_{1},...,E_{n}$
be projections in $\fA_{P}$, and $1\le k\le n$. Define the \termi{prior conditional probability}
by
\begin{equation}
\Prob_{P}(E_{k+1}\cdots E_{n}|E_{1}\cdots E_{k}):=\frac{\Prob_{P}(E_{1}\cdots E_{n})}{\Prob_{P}(E_{1}\cdots E_{k})},\label{eq:def:priorCondProb}
\end{equation}
that is,
\[
\Prob_{P}(B|A):=\frac{\Tr_{P}(AB)^{*}AB}{\Tr_{P}A^{*}A},\qquad A:=E_{1}\cdots E_{k},\ B:=E_{k+1}\cdots E_{n},
\]
when $\Tr_{P}A^{*}A>0$ (i.e., $A\neq0$).

For each $q\in\Spec(\Q_{\cfunc}(P))\subset\Z$, let $\SYM{\cE[\Q_{\cfunc}(P)=q]}{E[]}$
denote the spectral projection of $\Q_{\cfunc}(P)$ w.r.t.~$q$.
(Since $\fA_{P}$ is finite-dimensional, this spectral projection
always exists in $\fA_{P}$.) 

Let $P,P_{1},P_{2}\in\Scope(\cV)$, $P_{1},P_{2}\le P$, and $q_{i}\in\Spec(\Q_{\cfunc}(P_{i}))$.
Then we can consider the prior conditional probability
\[
\Prob_{P}(\Q_{\cfunc}(P_{2})=q_{2}|\Q_{\cfunc}(P_{1})=q_{1}):=\Prob_{P}(\E[\Q_{\cfunc}(P_{2})=q_{2}]|\E[\Q_{\cfunc}(P_{1})=q_{1}]).
\]
More generally, for $P,P_{i}\in\Scope(\cV)$, $P_{i}\le P$ ($i=1,...,n$),
and $q_{i}\in\Spec(\Q_{\cfunc}(P_{i}))$, we can define similarly
\[
\Prob_{P}(\Q_{\cfunc}(P_{k+1})=q_{k+1},...,\Q_{\cfunc}(P_{n})=q_{n}|\Q_{\cfunc}(P_{1})=q_{1},...,\Q_{\cfunc}(P_{k})=q_{k}).
\]

\begin{defn}
$P\in\Scope(\cV)$ is called an \termi{atomic scope} if $\Tr P=2$.
\end{defn}

The following lemmas are evident.
\begin{lem}
Let $P\in\Scope(\cV)$. Then the following conditions are equivalent.
\begin{enumerate}
\item $P$ is an atomic scope.
\item $P$ is minimal in $\Scope(\cV)\setminus\{0\}$.
\item There exists $P_{+}\in\FRP(\cV)$ such that $\Tr P_{+}=1$ and $P=P_{+}+P_{-}$,
$P_{-}:=\cC P_{+}\cC$.
\end{enumerate}
\end{lem}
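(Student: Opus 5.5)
The plan is to first extract two structural facts about any $\cC$-compatible decomposition $P=P_{+}+P_{-}$ of $P\in\Scope(\cV)$ (Definition \ref{def:decomp}), and then run the cycle (1)$\Rightarrow$(3)$\Rightarrow$(1) and (1)$\Rightarrow$(2)$\Rightarrow$(1).

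\emph{Fact A (orthogonality).} $P_{+}P_{-}=0$. Since $P_{\pm}$ are projections whose sum $P$ is a projection, we have $P_{+}\le P$, hence $P_{+}=P_{+}PP_{+}=P_{+}+P_{+}P_{-}P_{+}$. This gives $P_{+}P_{-}P_{+}=(P_{-}P_{+})^{*}(P_{-}P_{+})=0$, so $P_{-}P_{+}=0$.

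\emph{Fact B (equal ranks).} $\Tr P_{+}=\Tr P_{-}$. The charge conjugation $\cC$ is an antiunitary involution, and $\ran(\cC P_{+}\cC)=\cC\ran P_{+}$. Since $\cC$ maps orthonormal systems to orthonormal systems, the ranks agree. Consequently $\Tr P=\Tr P_{+}+\Tr P_{-}=2\Tr P_{+}$ is even for every scope, and $P\neq0$ forces $\Tr P\ge2$.

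(1)$\Rightarrow$(3): Take any decomposition of $P$. Then $\Tr P_{+}=\frac{1}{2}\Tr P=1$, and $P_{-}=\cC P_{+}\cC$ holds by definition. (3)$\Rightarrow$(1): Here $\Tr P=\Tr P_{+}+\Tr\cC P_{+}\cC=2$ by Fact B. (1)$\Rightarrow$(2): Let $0\neq P'\le P$ with $P'\in\Scope(\cV)$. Then $\Tr P'$ is even and satisfies $2\le\Tr P'\le\Tr P=2$. Thus $P-P'$ is a projection of trace $0$, so $P'=P$.

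(2)$\Rightarrow$(1) is the only step requiring a construction, and I expect it to be the main (mild) obstacle, since it needs Fact A. Write $P=P_{+}+P_{-}$; as $P\neq0$, Fact B gives $P_{+}\neq0$. Pick a unit vector $e\in\ran P_{+}$ and set $Q_{+}:=e\otimes e^{*}$ and $Q_{-}:=\cC Q_{+}\cC=(\cC e)\otimes(\cC e)^{*}$, which is the projection onto $\cC e\in\ran P_{-}$. By Fact A, $e\perp\cC e$, so $Q:=Q_{+}+Q_{-}$ is a rank-two projection satisfying $Q\le P$. It is $\cC$-compatible by construction, so $Q\in\Scope(\cV)\setminus\{0\}$. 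Minimality of $P$ then gives $P=Q$, hence $\Tr P=2$.
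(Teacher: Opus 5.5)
Your proof is correct, and all four implications go through as you wrote them. The paper gives no proof at all (it calls the lemma evident), so there is no different route to compare against. Your argument supplies the details the paper leaves implicit. The one ingredient that is not purely formal is the orthogonality $P_{+}P_{-}=0$, which you need in (2)$\Rightarrow$(1) to get $e\perp\cC e$ and hence a rank-two subscope. The other key input, evenness of $\Tr P$, follows from $\cC$ being an antiunitary involution, which holds in the Majorana setting the paper adopts.
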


\begin{lem}
Let $P\in\Scope(\cV)$ and $n:=\frac{1}{2}\Tr P$. Then there exist
atomic scopes $P_{1},...,P_{n}$ such that $P=P_{1}+\cdots+P_{n}$.
\end{lem}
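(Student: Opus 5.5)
Since $P\in\Scope(\cV)$ is $\cC$-compatible (Definition \ref{def:decomp}), I will start from a decomposition $P=P_{+}+P_{-}$ with $P_{\pm}\in\FRP(\cV)$ and $\cC P_{\pm}\cC=P_{\mp}$, and split it into atomic pieces by splitting $P_{+}$ into rank-one projections. First, $P_{+}$ and $P_{-}$ are mutually orthogonal: since $P$, $P_{+}$, $P_{-}$ are projections with $P=P_{+}+P_{-}$, squaring gives $P_{+}P_{-}+P_{-}P_{+}=0$. Multiplying on the left by $P_{+}$ gives $P_{+}P_{-}=-P_{+}P_{-}P_{+}$, and the right-hand side is selfadjoint, so $P_{+}P_{-}=P_{-}P_{+}$; hence $2P_{+}P_{-}=0$. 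Next, $\cC$ is an antiunitary involution, so $P_{-}=\cC P_{+}\cC$ has the same rank as $P_{+}$. Writing $m:=\Tr P_{+}=\Tr P_{-}$, we get $\Tr P=2m$, i.e.\ $m=n$.

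Next I choose an orthonormal basis $e_{1},\dots,e_{n}$ of $\ran P_{+}$, which consists of elements of $\cV$ because $\ran P_{+}\subset\cV$ is finite-dimensional. Put $Q_{i}:=e_{i}\otimes e_{i}^{*}$ and $Q_{i}':=\cC Q_{i}\cC=(\cC e_{i})\otimes(\cC e_{i})^{*}$. Because $\cC$ is antiunitary, the vectors $\cC e_{1},\dots,\cC e_{n}$ are orthonormal and span $\cC\ran P_{+}=\ran P_{-}$. Now define $P_{i}:=Q_{i}+Q_{i}'$. Since $\ran P_{+}\perp\ran P_{-}$, we have $Q_{i}\perp Q_{i}'$, so $P_{i}$ is a projection of rank $2$. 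By construction, the decomposition $P_{i}=Q_{i}+\cC Q_{i}\cC$ with $\Tr Q_{i}=1$ satisfies condition (3) of the preceding lemma. So $P_{i}$ is $\cC$-compatible, hence lies in $\Scope(\cV)$, and it is atomic.

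Finally, summing gives $\sum_{i}P_{i}=\sum_{i}Q_{i}+\sum_{i}Q_{i}'=P_{+}+P_{-}=P$. The pieces are mutually orthogonal, since for $i\neq j$ all four cross products vanish by the orthonormality within $\ran P_{\pm}$ and the orthogonality between them.

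The only step needing care is the orthogonality $P_{+}P_{-}=0$, which Definition \ref{def:decomp} does not state explicitly; the projection-sum argument in the first paragraph handles it. Everything else is routine bookkeeping with the antiunitary $\cC$, namely that it preserves orthonormality and that $\cC(u\otimes u^{*})\cC=(\cC u)\otimes(\cC u)^{*}$.
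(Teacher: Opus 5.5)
Your proof is correct and is the argument the paper leaves implicit when it calls this lemma evident. You split $P_{+}$ into rank-one projections $Q_{i}$ and pair each with $\cC Q_{i}\cC$, using that $\cC$ (complex conjugation in the Majorana representation) is antiunitary; your explicit check that $P_{+}P_{-}=0$, which the definition of $\cC$-compatibility does not state, is what makes each $P_{i}$ a genuine rank-two projection.
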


\begin{lem}
If $P$ is an atomic scope, the spectrum of $\Q_{\zeta}(P)$ in $\fA:=\CAR(\cV)$
is given by $\Spec_{\fA}(\Q_{\zeta}(P))=\{-1,0,1\}$.
\end{lem}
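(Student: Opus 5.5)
We plan to reduce the statement to Lemma \ref{lem:E1EnSpec-1}. Since $P$ is an atomic scope, $\Tr P=2$, so we may choose an orthonormal basis $\{e_{1},e_{2}\}$ of $\ran P$. (The $\cC$-compatibility of $P$ plays no role beyond guaranteeing $\Tr P=2$.) With $\cfunc(A)=\frac{1}{2}\Tr A$ we have
\[
\Q_{\cfunc}(P)=N_{1}+N_{2}-\bbOne,\qquad N_{i}:=\Psi^{*}(e_{i})\Psi(e_{i}).
\]
From the CAR we will check that each $N_{i}$ is a projection. Indeed, $N_{i}^{*}=N_{i}$, and
\[
N_{i}^{2}=\Psi^{*}(e_{i})\bigl(\bbOne-\Psi^{*}(e_{i})\Psi(e_{i})\bigr)\Psi(e_{i})=N_{i},
\]
using $\Psi^{*}(e_{i})^{2}=0$. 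We also have $[N_{1},N_{2}]=0$, because each $N_{i}$ is even in the generators and $e_{1}\perp e_{2}$, so all mixed anticommutators vanish.

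Next we define four mutually orthogonal projections summing to $\bbOne$:
\[
E_{-1}:=(\bbOne-N_{1})(\bbOne-N_{2}),\quad E_{1}:=N_{1}N_{2},\quad E_{0}:=N_{1}(\bbOne-N_{2})+(\bbOne-N_{1})N_{2}.
\]
Since $N_{1},N_{2}$ are commuting projections, $E_{-1},E_{0},E_{1}$ are mutually orthogonal projections with $E_{-1}+E_{0}+E_{1}=\bbOne$. A direct expansion then gives
\[
\Q_{\cfunc}(P)=(-1)E_{-1}+0\cdot E_{0}+1\cdot E_{1}.
\]
Lemma \ref{lem:E1EnSpec-1} then yields $\Spec_{\fA}(\Q_{\cfunc}(P))=\{-1,0,1\}$, provided each of these three projections is nonzero in $\CAR(\cV)$. (Alternatively, one could invoke the earlier lemma $\Spec(\d\Gamma(P))=\{0,1,2\}$ and shift by $\cfunc(P)\bbOne=\bbOne$, which is the shortest route.)

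The only nontrivial point, and the main obstacle, is proving nonvanishing, since the lemma requires nonzero projections. The plan is to work in the Fock representation $\pi_{0}$, where $\Psi_{0}(f)\Omega_{0}=0$. There we will exhibit explicit vectors:
\[
E_{-1}\Omega_{0}=\Omega_{0},\qquad E_{0}\Psi^{*}(e_{1})\Omega_{0}=\Psi^{*}(e_{1})\Omega_{0},\qquad E_{1}\Psi^{*}(e_{1})\Psi^{*}(e_{2})\Omega_{0}=\Psi^{*}(e_{1})\Psi^{*}(e_{2})\Omega_{0}.
\]
All three vectors are nonzero: by the CAR, the last one has norm $1$. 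Hence the images of $E_{-1},E_{0},E_{1}$ under $\pi_{0}$ are nonzero, so the projections themselves are nonzero in $\CAR(\cV)$.

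Alternatively, one can avoid representations by using the $C^{*}$-identity $\|E_{1}\|\ge\|\Psi(e_{2})\Psi(e_{1})E_{1}\Psi^{*}(e_{1})\Psi^{*}(e_{2})\|$ together with the CAR. The representation argument is cleaner and is the one we will use.
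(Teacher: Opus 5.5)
Your proposal is correct and follows the paper's implicit route. The paper calls the lemma evident: it comes from $\Spec_{\fA}(\d\Gamma(P))=\{0,\dots,n_P\}$ (itself a consequence of Lemma~\ref{lem:E1EnSpec-1}) shifted by $\cfunc(P)\bbOne=\bbOne$, and your decomposition $\Q_{\cfunc}(P)=-E_{-1}+E_{1}$ is that argument written out for $n_P=2$; your Fock-space check that $E_{-1},E_{0},E_{1}$ are nonzero supplies the hypothesis of Lemma~\ref{lem:E1EnSpec-1} that the paper leaves tacit (minor slip: you announce four projections but define three).
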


Let $P,P_{1},P_{2}\in\Scope(\cV)$. Assume that $P_{i}$ is atomic,
and $P_{i}\le P$ for $i=1,2$. 

Let $P_{i}=P_{i+}+P_{i-}$, and $P_{i-}=\cC P_{i+}\cC$. Let $\SYM{f_{i\pm}}{fi+-}\in\ran(P_{\pm}),\ \|f\|=1,\ i=1,2.$
Then we have
\[
\d\Gamma(P_{i\pm})=\Psi^{*}(f_{i\pm})\Psi(f_{i\pm})=a_{i\pm}^{*}a_{i\pm},\qquad\SYM{a_{i\pm}}{ai+-}:=\Psi(f_{i\pm}).
\]
For $a_{i\pm},a_{i\pm}^{*}$, we see that the four operators $a_{i\pm}$
($i=1,2$) (and also the four operators $a_{i\pm}^{*}$ ($i=1,2$))
anticommute each other, and 
\begin{equation}
\{a_{i+}^{*},a_{i+}\}=\{a_{i-}^{*},a_{i-}\}=\bbOne,\qquad i=1,2.\label{eq:anticomm-ai+}
\end{equation}
Further, we have
\[
\Q_{\cfunc}(P_{i})=\d\Gamma(P_{i})-1=a_{i+}^{*}a_{i+}+a_{i-}^{*}a_{i-}-1,
\]
Consider the prior conditional probability $\Prob_{P}(\Q_{\cfunc}(P_{2})=1|\Q_{\cfunc}(P_{1})=1).$
We see
\[
\E[\Q_{\cfunc}(P_{i})=1]=a_{i+}^{*}a_{i+}a_{i-}^{*}a_{i-},
\]
\[
\E[\Q_{\cfunc}(P_{i})=-1]=a_{i+}a_{i+}^{*}a_{i-}a_{i-}^{*},
\]
\[
\E[\Q_{\cfunc}(P_{i})=0]=\bbOne-a_{i+}^{*}a_{i+}a_{i-}^{*}a_{i-}-a_{i+}a_{i+}^{*}a_{i-}a_{i-}^{*}.
\]
Hence we have
\[
\Prob_{P}(\Q_{\cfunc}(P_{2})=1|\Q_{\cfunc}(P_{1})=1)=\frac{\Tr_{P}(AB)^{*}AB}{\Tr_{P}A^{*}A},\qquad A:=\E[\Q_{\cfunc}(P_{1})=1],\ B:=\E[\Q_{\cfunc}(P_{2})=1],
\]
with
\[
\Tr_{P}A^{*}A=\Tr_{P}\E[\Q_{\cfunc}(P_{1})=1]=\Tr_{P}a_{1+}^{*}a_{1+}a_{1-}^{*}a_{1-}=2^{-2}\Tr_{P}\bbOne.
\]
\[
\Tr_{P}(AB)^{*}AB=\Tr_{P}AB=\Tr_{P}a_{1+}^{*}a_{1+}a_{1-}^{*}a_{1-}a_{2+}^{*}a_{2+}a_{2-}^{*}a_{2-}.
\]

The most trivial case is where $P_{1}=P_{2}$, so that $\Prob_{P}(\Q_{\cfunc}(P_{2})=q_{1}|\Q_{\cfunc}(P_{1})=q_{2})=\delta_{q_{1}q_{2}}$.

The second trivial case is where $P_{1}P_{2}=0$. In this case, we
have
\begin{equation}
\{a_{1+}^{*},a_{2+}\}=\{a_{1-}^{*},a_{2-}\}=\{a_{1+}^{*},a_{2-}\}=\{a_{1-}^{*},a_{2+}\}=0.\label{eq:anticomm-ai+-PP}
\end{equation}
We can check that
\[
\Tr_{P}a_{1+}^{*}a_{1+}a_{1-}^{*}a_{1-}a_{2+}^{*}a_{2+}a_{2-}^{*}a_{2-}=2^{-4}\Tr_{P}\bbOne,
\]
and hence
\[
\Prob_{P}(\Q_{\cfunc}(P_{2})=1|\Q_{\cfunc}(P_{1})=1)=\Prob_{P}(\Q_{\cfunc}(P_{2})=1)=\frac{1}{4}.
\]
Similarly we have
\[
\Prob_{P}(\Q_{\cfunc}(P_{2})=-1|\Q_{\cfunc}(P_{1})=-1)=\Prob_{P}(\Q_{\cfunc}(P_{2})=-1)=\frac{1}{4},\qquad\text{etc}.
\]
Roughly speaking, this implies that the two observables $\Q_{\cfunc}(P_{1})$
and $\Q_{\cfunc}(P_{2})$ are statistically independent if $P_{1}P_{2}=0$.%

More generally we have the following
\begin{thm}
\label{thm:P(Q1|Q1)}Let $P,P_{1},P_{2}\in\Scope(\cV)$, $P_{1},P_{2}$
atomic, and $P_{1},P_{2}\le P$. %
Then
\begin{equation}
\Prob_{P}(\Q_{\cfunc}(P_{2})=1|\Q_{\cfunc}(P_{1})=1)=\frac{1}{4}\det\left[\left(\bOne+P_{1}P_{2}\right)\upha\ran P_{1}\right].\label{eq:P(Q1|Q1)}
\end{equation}
\end{thm}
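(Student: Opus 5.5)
The plan is to realize $\fA_{P}$ concretely as $\End(\Lambda(\ran P))$, the algebra of operators on the fermion Fock space (exterior algebra) over the $2N$-dimensional space $\ran P$, where $2N:=\Tr P$. In this picture $\Tr_{P}$ is the ordinary matrix trace and $\Tr_{P}\bbOne=2^{2N}$.

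First I simplify the numerator. Since $B$ is a projection, cyclicity gives
\[
\Tr_{P}(AB)^{*}AB=\Tr_{P}BAB=\Tr_{P}AB .
\]
So it suffices to compute $\Tr_{P}AB$ together with $\Tr_{P}A=2^{2N-2}$, which was already computed before the statement.

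Next I express the spectral projections through second quantization. For an invertible $X\in B(\ran P)$ let $\Gamma(X)$ denote its multiplicative second quantization on $\Lambda(\ran P)$. I will use the standard identity
\[
\Tr_{P}\Gamma(X)=\det(\bOne+X).
\]
Put $X_{t}:=\bOne+(t-1)P_{1}$. On a basis vector with $k$ occupied modes in $\ran P_{1}$, where $k\in\{0,1,2\}$, the operator $\Gamma(X_{t})$ acts as multiplication by $t^{k}$. The projection $A=\E[\Q_{\cfunc}(P_{1})=1]$ is exactly the projection onto $k=2$, so
\[
A=\lim_{t\to\infty}t^{-2}\Gamma(X_{t}).
\]
The same holds for $B$ with $Y_{s}:=\bOne+(s-1)P_{2}$. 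Using $\Gamma(X)\Gamma(Y)=\Gamma(XY)$, this gives
\[
\Tr_{P}AB=\lim_{t,s\to\infty}t^{-2}s^{-2}\det(\bOne+X_{t}Y_{s}).
\]

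To evaluate the limit I factor twice:
\[
\det(\bOne+X_{t}Y_{s})=\det X_{t}\,\det Y_{s}\,\det\bigl(Y_{s}^{-1}X_{t}^{-1}+\bOne\bigr).
\]
Here $\det X_{t}=t^{2}$ and $\det Y_{s}=s^{2}$. As $t,s\to\infty$ we have $X_{t}^{-1}\to\bOne-P_{1}$ and $Y_{s}^{-1}\to\bOne-P_{2}$. Writing $Q_{i}:=\bOne-P_{i}$ (restricted to $\ran P$), this yields
\[
\Tr_{P}AB=\det(\bOne+Q_{2}Q_{1})=\det(\bOne+Q_{1}Q_{2}Q_{1}).
\]
The last equality uses $Q_{1}^{2}=Q_{1}$ and $\det(\bOne+CD)=\det(\bOne+DC)$.

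The main obstacle is the final spectral bookkeeping that turns this into the determinant over $\ran P_{1}$. On $\ran P_{1}$ the operator $Q_{1}Q_{2}Q_{1}$ vanishes, so it contributes a factor $1$. On $\ran Q_{1}$, which has dimension $2N-2$, we have $Q_{1}Q_{2}Q_{1}=Q_{1}-Q_{1}P_{2}Q_{1}$, so
\[
\det(\bOne+Q_{1}Q_{2}Q_{1})=\det\bigl[(2\cdot\bOne-Q_{1}P_{2}Q_{1})\upha\ran Q_{1}\bigr].
\]
Let $\lambda_{1},\lambda_{2}$ be the eigenvalues of $P_{2}P_{1}P_{2}$ on the two-dimensional space $\ran P_{2}$. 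They coincide, with zeros included, with the eigenvalues of $P_{1}P_{2}P_{1}\upha\ran P_{1}$, since both operators are $C^{*}C$ and $CC^{*}$ for $C:=P_{1}P_{2}$ acting between two-dimensional spaces. Since $P_{2}Q_{1}P_{2}=P_{2}-P_{2}P_{1}P_{2}$, the operator $P_{2}Q_{1}P_{2}$ has eigenvalues $1-\lambda_{j}$ on $\ran P_{2}$. Hence the nonzero spectrum of $Q_{1}P_{2}Q_{1}$ consists of those $1-\lambda_{j}$ that are nonzero. Every remaining eigenvalue of $2\cdot\bOne-Q_{1}P_{2}Q_{1}$ on $\ran Q_{1}$ equals $2$.

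The edge cases need care. When some $\lambda_{j}=1$, the corresponding factor is $2$, which equals $1+\lambda_{j}$. When some $\lambda_{j}=0$, the factor is $2-1=1=1+\lambda_{j}$. With this checked, I expect
\[
\det(\bOne+Q_{1}Q_{2}Q_{1})=2^{2N-4}\prod_{j=1}^{2}(1+\lambda_{j})=2^{2N-4}\det\bigl[(\bOne+P_{1}P_{2})\upha\ran P_{1}\bigr].
\]
Here the last equality uses $P_{1}(\bOne+P_{1}P_{2})P_{1}=P_{1}+P_{1}P_{2}P_{1}$. Dividing by $\Tr_{P}A=2^{2N-2}$ gives the factor $\frac{1}{4}$ in (\ref{eq:P(Q1|Q1)}).

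As a consistency check, $P_{1}P_{2}=0$ gives $\lambda_{1}=\lambda_{2}=0$ and probability $\frac{1}{4}$, and $P_{1}=P_{2}$ gives $\lambda_{1}=\lambda_{2}=1$ and probability $1$. Both agree with the trivial cases computed before the statement.
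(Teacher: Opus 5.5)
Your proof is correct, and it takes a different route from the paper's.

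\textbf{What the paper does.} It chooses principal-angle bases $e_{2,i}=\alpha_i e_{1,i}+\beta_i f_i$ with $f_i\in\ker P_1$, and writes $b_i=\alpha_i a_i+\beta_i c_i$. It then expands $b_1^*b_1b_2^*b_2$, drops the cross terms, and evaluates the four surviving traces using the rule that each independent number operator halves $\Tr_P$. The construction of $e_{2,i}$ and $f_i$ breaks down when some $\lambda_i\in\{0,1\}$, so the paper only proves the formula in the generic case $0<\lambda_1\le\lambda_2<1$.

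\textbf{What your route buys.} Your identity $\Tr_P\bigl(t^{\d\Gamma(P_1)}s^{\d\Gamma(P_2)}\bigr)=\det(\bOne+X_tY_s)$, combined with extraction of the leading coefficient and the reduction to $\det(\bOne+Q_1Q_2Q_1)$, is basis-free. It proves the statement in all cases, degenerate ones included. The count is cleanest written as $2^{(2N-2)-m}\prod_{\lambda_j\neq1}(1+\lambda_j)$ with $m:=\#\{j:\lambda_j\neq1\}$. This equals $2^{2N-4}\prod_j(1+\lambda_j)$ also when $N=1$, where there are no leftover factors $2$ to pair with the $\lambda_j=1$.

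Your method also scales beyond atomic scopes. For $\mathrm{rank}\,P_1=\mathrm{rank}\,P_2=2n$, the identical computation gives $2^{-2n}\det[(\bOne+P_1P_2)\upha\ran P_1]$ for the conditional probability of the maximal charge $n$. More generally, every two-scope prior conditional probability can be read off from the coefficients of the polynomial $\det(\bOne+X_tY_s)$ in $t,s$. This is exactly the ``better calculation technique'' the paper says is lacking for non-atomic scopes.

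\textbf{What the paper's route buys.} Its expansion is completely elementary. It shows term by term how the overlaps $\lambda_i=\alpha_i^2$ interpolate between the independent case $P_1P_2=0$ and the trivial case $P_1=P_2$.

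\textbf{Two small points of precision.} First, $\fA_P$ is not all of $\End(\Lambda(\ran P))$. It is only the gauge-invariant, block-diagonal subalgebra $\bigoplus_k\End(\Lambda^k(\ran P))$. Nothing changes, because $\Tr_P$ is the restriction of the Fock-space trace, as the paper's normalization $\Tr_P(a_{1+}^*a_{1+}a_{1-}^*a_{1-})=2^{-2}\Tr_P\bbOne$ confirms. Second, in the cyclicity step you need $A^*A=A$ as well as $B^2=B$.
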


\begin{rem}
The r.h.s.~of (\ref{eq:P(Q1|Q1)}) depends only on the scopes $P_{1}$
and $P_{2}$, not on $P$. Thus it will be justified to replace ``$\Prob_{P}$''
in the l.h.s.~with ``$\Prob$'' simply. This $P$-independence
is a desirable property for a probabilistic law such as (\ref{eq:P(Q1|Q1)})
to be called a ``universal empirical law''; If a probabilistic law
is $P$-dependent, it should be understood to be ``context-dependent''
in a sense, and so non-universal.
\end{rem}

\begin{proof}
Let $P_{k}\in\FRP(\cV)$ ($k=1,2$) and $\Tr P_{i}=2$. Then the two
operators $P_{1}P_{2}P_{1}$ and $P_{2}P_{1}P_{2}$ have the same
eigenvalues $\{0,\lambda_{1},\lambda_{2}\}$ where $0\le\lambda_{1}\le\lambda_{2}\le1$.%
{} Note that
\[
\det\left[\left(\bOne+P_{1}P_{2}\right)\upha\ran P_{1}\right]=\left(1+\lambda_{1}\right)\left(1+\lambda_{2}\right).
\]
We will prove (\ref{eq:P(Q1|Q1)}) only for ``generic'' cases where
$0<\lambda_{1}\le\lambda_{2}<1$.

Let $\{e_{1,1},e_{1,2}\}$ be an orthonormal basis of $\ran P_{1}$
such that $P_{1}P_{2}e_{1,i}\,(=P_{1}P_{2}P_{1}e_{1,i})=\lambda_{i}e_{1,i}$,
$i=1,2$. Let
\[
e_{2,i}:=\frac{P_{2}e_{1,i}}{\|P_{2}e_{1,i}\|}=\lambda_{i}^{-1/2}P_{2}e_{1,i},\qquad i=1,2.
\]
Then we see that $\{e_{2,1},e_{2,2}\}$ is an orthonormal basis of
$\ran P_{2}$ such that $P_{2}P_{1}e_{2,i}=\lambda_{i}e_{2,i}$, $i=1,2$.
We also find that there exists an orthonormal system $f_{1},f_{2}\in\ker P_{1}$
such that
\[
e_{2,i}=\alpha_{i}e_{1,i}+\beta_{i}f_{i},\qquad\alpha_{i}:=\sqrt{\lambda_{i}},\ \beta_{i}:=\sqrt{1-\lambda_{i}}.
\]
To simplify the notations, let
\[
a_{i}:=\Psi(e_{1,i}),\qquad b_{i}:=\Psi(e_{2,i}),\qquad c_{i}:=\Psi(f_{i}),\qquad i=1,2.
\]
Note that
\begin{equation}
b_{i}=\Psi(\alpha_{i}e_{1,i}+\beta_{i}f_{i})=\alpha_{i}a_{i}+\beta_{i}c_{i}.\label{eq:bi=00003Dala}
\end{equation}
We have $\d\Gamma(P_{1})=\sum_{i=1}^{2}a_{i}^{*}a_{i}$, $\d\Gamma(P_{2})=\sum_{i=1}^{2}b_{i}^{*}b_{i}$,
and
\[
A:=\E[\Q_{\cfunc}(P_{1})=1]=a_{1}^{*}a_{1}a_{2}^{*}a_{2},\qquad B:=\E[\Q_{\cfunc}(P_{2})=1]=b_{1}^{*}b_{1}b_{2}^{*}b_{2}.
\]
By (\ref{eq:bi=00003Dala}) and $\alpha_{i}^{2}+\beta_{i}^{2}=1$,
we can verify that
\begin{align*}
\Tr_{P}(AB)^{*}AB & =\Tr_{P}AB=\Tr_{P}\left(a_{1}^{*}a_{1}a_{2}^{*}a_{2}b_{1}^{*}b_{1}b_{2}^{*}b_{2}\right)\\
 & =\alpha_{1}^{2}\alpha_{2}^{2}\Tr_{P}\left(a_{1}^{*}a_{1}a_{2}^{*}a_{2}a_{1}^{*}a_{1}a_{2}^{*}a_{2}\right)+\beta_{1}^{2}\alpha_{2}^{2}\Tr_{P}\left(a_{1}^{*}a_{1}a_{2}^{*}a_{2}c_{1}^{*}c_{1}a_{2}^{*}a_{2}\right)\\
 & \qquad+\alpha_{1}^{2}\beta_{2}^{2}\Tr_{P}\left(a_{1}^{*}a_{1}a_{2}^{*}a_{2}a_{1}^{*}a_{1}c_{2}^{*}c_{2}\right)+\beta_{1}^{2}\beta_{2}^{2}\Tr_{P}\left(a_{1}^{*}a_{1}a_{2}^{*}a_{2}c_{1}^{*}c_{1}c_{2}^{*}c_{2}\right)\\
 & =\alpha_{1}^{2}\alpha_{2}^{2}\Tr_{P}\left(a_{1}^{*}a_{1}a_{2}^{*}a_{2}\right)+\beta_{1}^{2}\alpha_{2}^{2}\Tr_{P}\left(a_{1}^{*}a_{1}a_{2}^{*}a_{2}c_{1}^{*}c_{1}\right)\\
 & \qquad+\alpha_{1}^{2}\beta_{2}^{2}\Tr_{P}\left(a_{1}^{*}a_{1}a_{2}^{*}a_{2}c_{2}^{*}c_{2}\right)+\beta_{1}^{2}\beta_{2}^{2}\Tr_{P}\left(a_{1}^{*}a_{1}a_{2}^{*}a_{2}c_{1}^{*}c_{1}c_{2}^{*}c_{2}\right)\\
 & =\frac{1}{4}\alpha_{1}^{2}\alpha_{2}^{2}\Tr_{P}\bbOne+\frac{1}{8}\beta_{1}^{2}\alpha_{2}^{2}\Tr_{P}\bbOne+\frac{1}{8}\alpha_{1}^{2}\beta_{2}^{2}\Tr_{P}\bbOne+\frac{1}{16}\beta_{1}^{2}\beta_{2}^{2}\Tr_{P}\bbOne\\
 & =\frac{1}{16}\left(4\alpha_{1}^{2}\alpha_{2}^{2}+2\beta_{1}^{2}\alpha_{2}^{2}+2\alpha_{1}^{2}\beta_{2}^{2}+\beta_{1}^{2}\beta_{2}^{2}\right)\Tr_{P}\bbOne\\
 & =\frac{1}{16}\left(\alpha_{1}^{2}+1\right)\left(\alpha_{2}^{2}+1\right)\Tr_{P}\bbOne\\
 & =\frac{1}{16}\left(\lambda_{1}+1\right)\left(\lambda_{2}+1\right)\Tr_{P}\bbOne\\
 & =\frac{1}{16}\det\left(\left(\bOne+P_{1}P_{2}\right)\upha\ran P_{1}\right)\Tr_{P}\bbOne
\end{align*}
Therefore, we obtain (\ref{eq:P(Q1|Q1)}) from $\Tr_{P}A^{*}A=\frac{1}{4}\Tr_{P}\bbOne$.
\end{proof}
Perhaps, more general probabilistic laws for scope-local charges can
be derived from Theorem \ref{thm:P(Q1|Q1)}, in principle. However,
it seems that the calculations needed to derive such laws for non-atomic
scopes become much more complicated, unless we can devise better calculation
techniques.

\def\arxiv#1{\href{https://arxiv.org/abs/#1} {\texttt{arXiv:#1}} }%
\def\ydoi#1{DOI: \href{https://doi.org/#1} {#1}}%

\newcommand{\etalchar}[1]{$^{#1}$}

\end{document}